\newcommand{\makePageBig}{
\setlength{\topmargin}{-1cm}
\setlength{\oddsidemargin}{-7mm}
\setlength{\evensidemargin}{-7mm}
\textwidth18cm
\textheight235mm
\headheight14pt
\headsep5mm}
\providecommand{\ignore}[1]{}
\newcommand{\Txt}[1][]{\mathbf{Txt}^{#1}}
\newcommand{\It}{\mathbf{It}}
\newcommand{\BMS}{\mathbf{BMS}}
\newcommand{\Ex}{\mathbf{Ex}}
\newcommand{\Bc}{\mathbf{Bc}}
\newcommand{\Cons}{\mathbf{Cons}}
\newcommand{\Comp}{\mathbf{Cons}}
\newcommand{\Conv}{\mathbf{Conv}}
\newcommand{\Caut}{\mathbf{Caut}}
\newcommand{\NU}{\mathbf{NU}}
\newcommand{\SNU}{\mathbf{SNU}}
\newcommand{\Dec}{\mathbf{Dec}}
\newcommand{\SDec}{\mathbf{SDec}}
\newcommand{\Wb}{\mathbf{Wb}}
\newcommand{\WMon}{\mathbf{WMon}}
\newcommand{\Mon}{\mathbf{Mon}}
\newcommand{\SMon}{\mathbf{SMon}}
\newcommand{\natnum}{\mathbb{N}}
\newcommand{\N}{\mathbb{N}}
\newcommand{\Seq}[1]{\text{$#1^{<\omega}$}}
\newcommand{\ISeq}[1]{\text{$#1^{\omega}$}}
\newcommand{\concat}{{^\smallfrown}}
\newcommand{\ps}{\mathrm{pos}}
\newcommand{\pump}{\mathrm{pump}}
\newcommand{\visit}{V}
\newcommand{\states}{\mathrm{pr}_1}
\newcommand{\Mypath}{\mathrm{path}}
\newcommand{\totalCp}{\mathcal{R}}
\newcommand{\pad}{\mathrm{pad}}
\newcommand{\last}{\mathrm{last}}
\newcommand{\dom}{\mathrm{dom}}
\newcommand{\ran}{\mathrm{ran}}
\newcommand{\cnt}{\mathrm{content}}
\newcommand{\simu}{\mathfrak{s}}
\newcommand{\Simu}{\mathfrak{S}}
\newcommand{\NoMC}{\mathrm{NoMC}}
\newcommand{\MC}{\mathrm{MC}}
\newcommand{\pr}{\mathrm{pr}}
\newcommand{\sqr}[2]{{\vcenter{\hrule height.#2pt
        \hbox{\vrule width.#2pt height#1pt \kern#1pt
           \vrule width.#2pt}
        \hrule height.#2pt}}}
\newcommand{\Qed}[1]{{\nobreak\hfil\penalty50
  \hskip1em\hbox{}\nobreak\hfil$\sqr{8}{8}$\enspace\sc #1
  \parfillskip=0pt \finalhyphendemerits=0 \par}
  \bigskip}
\newcommand{\QedFor}[1]{\Qed{(for~#1)}}
\newcommand{\QedForClaim}[1][]{\QedFor{~Claim#1}}
\renewenvironment{proof}{

\noindent
\noindent\emph{Proof.}\enspace}{\Qed{}}
\newcommand{\CalL}{\mathcal{L}}
\newtheorem{thm}{Theorem}[section]
\newtheorem{cor}[thm]{Corollary}
\newtheorem{lem}[thm]{Lemma}
\theoremstyle{definition}
\newtheorem{defn}[thm]{Definition}
\newcounter{claimCounter}[thm]
\newcounter{caseCounter}[thm]
\newcounter{subcaseCounter}[caseCounter]
\def\makeinnocent#1{\catcode`#1=12 }
\def\csarg#1#2{\expandafter#1\csname#2\endcsname}
\def\ThrowAwayComment#1{\begingroup
    \def\CurrentComment{#1}    \let\do\makeinnocent \dospecials
    \makeinnocent\^^L    \endlinechar`\^^M \catcode`\^^M=12 \xComment}
{\catcode`\^^M=12 \endlinechar=-1  \gdef\xComment#1^^M{\def\test{#1}
      \csarg\ifx{PlainEnd\CurrentComment Test}\test
          \let\next\endgroup
      \else \csarg\ifx{LaLaEnd\CurrentComment Test}\test
            \edef\next{\endgroup\noexpand\end{\CurrentComment}}
      \else \let\next\xComment
      \fi \fi \next}
}
	\csarg\xdef{PlainEndproofTest}{\string\\endproof}	\csarg\xdef{LaLaEndproofTest}{\string\\end\string\{proof\string\}}}
\renewcommand{\proof}{\ifthenelse{\boolean{useproof}}{

\noindent
\noindent\emph{Proof.}}{\ThrowAwayComment{proof}}}
\renewcommand{\endproof}{\ifthenelse{\boolean{useproof}}{\Qed{}}{}}
\newcommand{\separate}{sep}
\title[Learners for Languages from Text with Finitely Many Memory Changes]{Learning Languages in the Limit from Positive Information \\ with Finitely Many Memory Changes}
\author{%
 Timo K{\"o}tzing, Karen Seidel
}
\tikzset{>=latex}
\begin{document}

\begin{abstract}
We investigate learning collections of languages from texts by an inductive inference machine with access to the current datum and a bounded memory in form of states.
Such a bounded memory states ($\BMS$) learner is considered successful in case it eventually settles on a correct hypothesis while exploiting only finitely many different states.

We give the complete map of all pairwise relations for an established collection of criteria of successfull learning.
Most prominently, we show that non-U-shapedness is not restrictive, while conservativeness and (strong) monotonicity are.
Some results carry over from iterative learning by a general lemma showing that, for a wealth of restrictions (the \emph{semantic} restrictions), iterative and bounded memory states learning are equivalent.
We also give an example of a non-semantic restriction (strongly non-U-shapedness) where the two settings differ.
\end{abstract}

\maketitle


\setboolean{useproof}{true}

\newcommand{\justify}[1]{\textcolor{red}{#1}}

\newcommand{\drawBackboneTwo}{
\begin{scope}[every node/.style={minimum size=4mm}]

\node (nothing) at (0,0) {\textbf{T}};
\node (nu)  at (3,-1)       {$\NU$};
\node (dec)  at (1,-3)      {$\Dec$};

\node (smon) at (-5,-8)     {$\SMon$};
\node (mon) at (-5,-6)      {$\Mon$}; 
\node (wmon) at (-1.5,-3.5)     {$\WMon$};
\node (caut) at (-2,-2)     {$\Caut$};

\node (sdec) at (3,-5)      {$\SDec$};
\node (snu) at (5,-3)       {$\SNU$};
\node (conv) at (5,-7)      {$\Conv$};

\draw[<-] (dec) -- (smon);
\draw[<-] (nu) -- (snu);
\draw[<-] (nu) -- (dec);
\draw[<-] (dec) -- (sdec); 
\draw[<-] (snu) -- (sdec);
\draw[<-] (snu) -- (conv);
\draw[<-] (nothing) -- (caut);
\draw[<-] (caut) -- (smon);
\draw[<-] (nothing) -- (nu);
\draw[<-] (nothing) -- (wmon);
\draw[<-] (nothing) edge[bend right] (mon);
\draw[<-] (mon) -- (smon);
\draw[<-] (wmon) edge[bend right=25] (conv);
\draw[<-] (wmon) -- (smon);
\end{scope}
}

\section{Introduction}

We are interested in the problem of algorithmically learning a description for a formal language (a computably enumerable subset of the set of natural numbers) when presented successively all and only the elements of that language; this is sometimes called \emph{inductive inference}, a branch of (algorithmic) learning theory. For example, a learner $M$ might be presented more and more even numbers. After each new number, $M$ outputs a description for a language as its conjecture. The learner $M$ might decide to output a program for the set of all multiples of $4$, as long as all numbers presented are divisible by~$4$. Later, when $M$ sees an even number not divisible by $4$, it might change this guess to a program for the set of all multiples of~$2$.

\medskip
Many criteria for deciding whether a learner $M$ is \emph{successful} on a language~$L$ have been proposed in the literature. Gold, in his seminal paper \cite{Gol:j:67}, gave a first, simple learning criterion, \emph{$\Txt\Ex$-learning}\footnote{$\Txt$ stands for learning from a \emph{text} of positive examples; $\Ex$ stands for \emph{explanatory}.}, where a learner is \emph{successful} iff, on every \emph{text} for $L$ (listing of all and only the elements of $L$) it eventually stops changing its conjectures, and its final conjecture is a correct description for the input sequence.
Trivially, each single, describable language $L$ has a suitable constant function as an $\Txt\Ex$-learner (this learner constantly outputs a description for $L$). Thus, we are interested in analyzing for which \emph{classes of languages} $\CalL$ there is a \emph{single learner} $M$ learning \emph{each} member of $\CalL$. Sometimes, this framework is called \emph{language learning in the limit} and has been studied extensively, using a wide range of learning criteria similar to $\Txt\Ex$-learning (see, for example, the textbook \cite{Jai-Osh-Roy-Sha:b:99:stl2}).

One major criticism of the model suggested by Gold is its excessive use of memory, see for example \cite{Cas-Moe:j:08:NUIt}: for each new hypothesis the entire history of past data is available.
Iterative learning is the most common variant of learning in the limit which addresses memory constraints: the memory of the learner on past data is just its current hypothesis.
Due to the padding lemma \cite{Jai-Osh-Roy-Sha:b:99:stl2}, this memory is not necessarily void, but only finitely many data can be memorized in the hypothesis.
There is a comprehensive body of work on iterative learning, see, e.g., \cite{Cas-Koe:c:10:colt,Cas-Moe:j:08:NUIt,%
jain2016role,Jai-Moe-Zil:j:13,Jai-Osh-Roy-Sha:b:99:stl2}. 

Another way of modelling restricted memory learning is to grant the learner access to not their current hypothesis, but a \emph{state} which can be used in the computation of the next hypothesis (and next state). This was introduced in \cite{Car-Cas-Jai-Ste:j:07} and called \emph{bounded memory states (BMS)} learning.
It is a reasonable assumption to have a countable reservoir of states. Assuming a computable enumeration of these states, we use natural numbers to refer to them.
Note that allowing arbitrary use of all natural numbers as states would effectively allow a learner to store all seen data in the state, thus giving the same mode as Gold's original setting.
Probably the minimal way to restrict the use of states is to demand for successful learning that a learner must stop using new states eventually (but may still traverse among the finitely many states produced so far, and may use infinitely many states on data for a non-target language).
It was claimed that this setting is equivalent to iterative learning \cite[Remark~38]{Car-Cas-Jai-Ste:j:07} (this restriction is called \emph{ClassBMS} there, we refer to it by $\Txt\BMS_\ast\Ex$).
However, this was only remarked for the plain setting of explanatory learning; for further restrictions, the setting is completely unknown, only for explicit constant state bounds a few scattered results are known, see \cite{Car-Cas-Jai-Ste:j:07,Cas-Koe:j:11:memLess}.

\medskip
In this paper, we consider a wealth of restrictions, described in detail in Section~\ref{sec:prelims} (after an introduction to the general notation of this paper).
Following the approach of giving \emph{maps} of pairwise relations suggested in \cite{kotzing2016towards}, we give a complete map in Figure~\ref{fig:TxtBMSEx}.
We note that this map is the same as the map for iterative learning given in \cite{jain2016role}, but partially for different reasons.

In Lemma~\ref{BMS-It_sem-afsoet} we show that, for many restrictions (the so-called \emph{semantic} restrictions, where only the semantics of hypotheses are restricted) the learning setting with bounded memory states is equivalent to learning iteratively.
This proves and generalizes the aforementioned remark in \cite{Car-Cas-Jai-Ste:j:07} to a wide class of restrictions.
The iterative learner uses the hypotheses of the $\BMS_\ast$-learner on an equivalent text and additionally pads a subgraph of the translation diagram to it.
It keeps track of all states visited so far together with the datum which caused the first transfer to the respective state.
This way we can reconstruct the last first-time-visited state while observing the equivalent text sequence.
Moreover, the equivalent text prevents the iterative learner from returning to a previously visited state but the last one and hence enables the required convergence.

However, if restrictions are not semantic, then iterative and bounded memory states learning can differ.
We show this concretely for the case of so-called \emph{strongly non-U-shaped} learning in Theorem~\ref{thm:itBMSDifferSNU}.
Inspired by cognitive science research \cite{Str-Sta:b:82}, \cite{Mar:b:92} a semantic version of this requirement was defined in \cite{Bal-Cas-Mer-Ste-Wie:j:08} and later the syntactic variant was introduced in \cite{Cas-Moe:j:11:optLan}.
Both requirements have been extensively studied, see \cite{Car-Cas:j:13:topics} for a survey and moreover \cite{Cas-Koe:j:11:memLess}, \cite{case2016strongly}, \cite{KSS17}.
The proof of Theorem~\ref{thm:itBMSDifferSNU} uses an intricate ORT-argument, which might suggest that the two settings, while different, are very similar nonetheless.
It is based on the proof that strong non-U-shapedness restricts $\BMS_\ast\Ex$-learning.
The proof of the latter result combines the techniques for showing that strong non-U-shapedness restricts iterative learning, as proved in \cite[Theorem 5.7]{Cas-Koe:j:11:memLess}, and that not every class strongly monotonically learnable by an iterative learner is strongly non-U-shapedly learnable by an iterative learner, see \cite[Theorem~5]{jain2016role}.
Moreover, it relies on showing that state decisiveness can be assumed in Lemma~\ref{StateDecisive}.

The remainder of Section~\ref{SyntacticLearning} completes the map given in Figure~\ref{fig:TxtBMSEx} for the case of syntactic restrictions (since these do not carry over from the setting of iterative learning).
All syntactic learning requirements are closely related to strongly locking learners.
The fundamental concept of a locking sequence was introduced by \cite{Blu-Blu:j:75}.
For a similar purpose than ours \cite{jain2016role} introduced strongly locking learners.
We generalize their construction for certain syntactically restricted iterative learners from a strongly locking iterative learner.
Finally, we obtain that all non-semantic learning restrictions also coincide for $\BMS_\ast$-learning.

\ignore{
We have the following \emph{known} results.
\begin{eqnarray}
\label{eq:NUBMS2} \forall c\in\N_{>0} \;\; [\BMS_c\Txt\Bc] & = & [\BMS_{(c+1)!}\Txt\Ex];\quad \cite[Th.~34]{Car-Cas-Jai-Ste:j:07}
\\
\label{eq:BMSBcEx} [\BMS_2\Txt\Ex] & = & [\BMS_2\Txt\NU\Ex];\quad \cite[Th.~35]{Car-Cas-Jai-Ste:j:07}\\
\label{eq:BMSHrrchy1} \forall c\in\N \;\; [\BMS_c\Txt\Ex] & \subsetneq & [\BMS_{c+1}\Txt\Ex];\quad \cite[Th.~37]{Car-Cas-Jai-Ste:j:07}\\
\label{eq:BMSHrrchy2} \bigcup_{c\in\N}[\BMS_c\Txt\Ex] & \subsetneq & [\BMS_\ast\Txt\Ex] \quad \subsetneq \quad [\Txt\BMS_\ast\Ex];\quad  \cite[Rem.~38]{Car-Cas-Jai-Ste:j:07}\\
\label{eq:ClassBMSStarIt} [\Txt\BMS_\ast\Ex] & = & [\It\Txt\Ex];\quad \cite[Rem.~38]{Car-Cas-Jai-Ste:j:07}\\
\label{eq:BMSStarItConf} [\BMS_\ast\Txt\Ex] & = & [\It\mathrm{Confident}\Txt\Ex];\quad \cite[Rem.~38]{Car-Cas-Jai-Ste:j:07}\\
\label{eq:NUBMS3} \bigcup_{n \in \natnum_{>0}} [\BMS_n\Txt\NU\Ex] & \subsetneq & [\BMS_3\Txt\Ex];\quad \cite[Th.~3.3]{Cas-Koe:j:11:memLess}\\
\label{eq:NUBMS3} [\BMS_\ast\Txt\NU\Ex] & = & [\BMS_\ast\Txt\Ex];\quad \cite[Cor.~3.6]{Cas-Koe:j:11:memLess}\\
\label{eq:Canny} [\mathbf{Can}\BMS_\ast/\It\Txt(\SNU)\Ex] & = & [\BMS_\ast/\It\Txt(\SNU)\Ex]; \quad \cite[Rem.~3.8]{Cas-Koe:j:11:memLess}\\
\label{eq:SNUBMS2} [\BMS_2\Txt\Ex] & \not\subseteq & [\BMS_\ast\Txt\SNU\Ex];\quad \cite[Th.~3.10]{Cas-Koe:j:11:memLess}\\
\label{eq:DecBMS2} [\BMS_2\Txt\Dec\Ex] & \subsetneq & [\BMS_2\Txt\Ex].\quad \cite[Rem.~3.12]{Cas-Koe:j:11:memLess}
\end{eqnarray}

We later on refer to the results:
\begin{eqnarray}
\label{eq:SNUIt} [\It\Txt\Ex] & \separate & [\It\Txt\SNU\Ex];\quad \cite{Cas-Koe:c:10:colt}\\
\label{eq:NUIt} [\It\Txt\Ex] & = & [\It\Txt\NU\Ex];\quad \cite{Cas-Moe:j:08:NUIt} \\
\label{eq:MonSNUIt} [\It\Txt\Mon\Ex] & \not\subseteq & [\It\Txt\SNU\Ex];
\quad \cite{jain2016role} \\
\label{eq:SMonSNUIt} [\It\Txt\SMon\Ex] & \not\subseteq & [\It\Txt\SNU\Ex];
\quad \cite{jain2016role}
\end{eqnarray}
}

%
%
%
%

\section{Learners, Success Criteria and other Terminology}
\label{sec:prelims}

As far as possible, we follow \cite{Jai-Osh-Roy-Sha:b:99:stl2} on the learning theoretic side and \cite{Odi:b:99} for computability theory. We recall the most essential notation and definitions.

\medskip
We let $\N$ denote the \emph{natural numbers} including $0$.
For a function $f$ we write $\dom(f)$ for its \emph{domain} and $\ran(f)$ for its \emph{range}.
If we deal with (a subset of) a cartesian product, we are going to refer to the \emph{projection functions} to the first or second coordinate by $\pr_1$ and $\pr_2$, respectively.

Further, $\Seq{X}$ denotes the \emph{finite sequence}s over the set $X$ and $\ISeq{X}$ stands for the \emph{countably infinite sequence}s over $X$.
For every $\sigma \in \Seq{X}$ and $t \leq |\sigma|$, $t \in \N$, we let $\sigma[t] := \{ (s,\sigma(s)) \mid s < t \}$ denote the \emph{restriction of $\sigma$ to $t$}.
Moreover, for sequences $\sigma, \tau \in \Seq{X}$ their concatenation is denoted by $\sigma\concat\tau$.
Finally, we write $\last(\sigma)$ for the last element of $\sigma$ , $\sigma(|\sigma|-1)$, and $\sigma^-$ for the initial segment of $\sigma$ without $\last(\sigma)$, i.e. $\sigma[|\sigma|-1]$. Clearly, $\sigma=\sigma^-\concat\last(\sigma)$.

For a finite set $D\subseteq\N$ and a finite sequence $\sigma\in\Seq{X}$, we denote by $\langle D\rangle$ and $\langle \sigma\rangle$ a canonical index for $D$ or $\sigma$, respectively. Further, we fix a Goedel pairing function $\langle.,.\rangle$ with two arguments.

\medskip
Let $L \subseteq \N$. We interpret every $n\in\N$ as a code for a word.
If $L$ is recursively enumerable, we call $L$ a \emph{language}.

\medskip
We fix a programming system $\varphi$ as introduced in \cite{Roy-Cas:b:94}. Briefly, in the $\varphi$-system, for a natural number $p$, we denote by $\varphi_p$ the partial computable function with program code $p$.
We call $p$ an \emph{index} for $W_p$ defined as $\dom(\varphi_p)$.

In reference to a Blum complexity measure $\Phi_p$, for all $p, t \in \N$, we denote by $W^t_p \subseteq W_p$ the recursive set of all natural numbers less or equal to $t$, on which the machine executing $p$ halts in at most $t$ steps, i.e.
$$W^t_p = \{ x \mid x\leq t \:\wedge\: \Phi_p(x)\leq t \}.$$
Moreover, the well-known s-m-n theorem gives finite and infinite recursion theorems, see \cite{case1974periodicity}, \cite{Cas:j:94:self}, \cite{Odi:b:99}.
We will refer to Case's Operator Recursion Theorem ORT in its 1-1-form, see for example \cite{Koe:th:09}.


\bigskip
Throughout the paper, we let $\Sigma=\N\cup\{\#\}$ be the input alphabet with $n\in\N$ interpreted as code for a word in the language and $\#$ interpreted as pause symbol, i.e. no new information.
Further, let $\Omega=\N\cup\{?\}$ be the output alphabet with $p\in\N$ interpreted as $\varphi$-index and $?$ as no hypothesis or repetition of the last hypothesis, if existent. A function with range $\Omega$ is called a hypothesis generating function.

A \emph{learner} is always a (partial) computable function $$M: \dom(M) \subseteq \Seq{\Sigma} \to \Omega.$$ The set of all total computable functions $M: \Seq{\Sigma} \to \Omega$ is denoted by $\totalCp$.

\medskip
Let $f \in \Seq{\Sigma} \cup \ISeq{\Sigma}$, then the \emph{content of $f$}, defined as $\cnt(f) := \ran(f)\setminus\{\#\}$, is the set of all natural numbers, about which $f$ gives some positive information. 
The \emph{set of all texts for the language $L$} is defined as
$$\Txt(L) := \{ T \in \ISeq{\Sigma} \mid \cnt(T) = L \}.$$

\begin{defn}
\label{def:it}
Let $M$ be a learner. $M$ is an \emph{iterative learner} or \emph{$\It$-learner}, for short $M \in \It$, if there is a computable (partial) hypothesis generating function 
$h_M: \Omega\times\Sigma \to \Omega$
such that $M=h_M^\ddagger$ where $h_M^\ddagger$ is defined on finite sequences by
\begin{align*}
h_M^\ddagger(\epsilon)&=\:?;\\
h_M^\ddagger(\sigma\concat x)&=h_M(h_M^\ddagger(\sigma),x).
\end{align*}
\end{defn}

\begin{defn}
\label{def:bms}
Let $M$ be a learner. $M$ is a \emph{bounded memory states learner}  or \emph{$\BMS$-learner}, for short $M \in \BMS$, if there are a computable (partial) hypothesis generating function 
$h_M:\N\times \Sigma \to \Omega$
and a computable (partial) state transition function
$s_M:\N\times \Sigma \to \N$
such that $\dom(h_M)=\dom(s_M)$ and $M=h_M^\ast$ where $h_M^\ast$ and $s_M^\ast$ are defined on finite sequences by
\begin{align*}
s_M^\ast(\epsilon)&=0; \\
h_M^\ast(\sigma\concat x)&=h_M(s_M^\ast(\sigma),x); \\
s_M^\ast(\sigma\concat x)&=s_M(s_M^\ast(\sigma),x).
\end{align*}
\end{defn}

Note that every iterative learner gives a $\BMS$-learner by identifying the hypothesis space $\Omega$ with the set of states via a computable bijection between $\N$ and $\Omega$.
The resulting $\BMS$-learner will succeed on the same languages the iterative learner does learn. Further, as the set of visited states contains exactly all hypotheses the learner puts out, this $\BMS$-learner only uses finitely many states on all texts for languages it explanatory learns.
In \cite[Rem.~38]{Car-Cas-Jai-Ste:j:07} it is claimed that $\BMS_\ast$-learners and iterative learners are equally powerful on texts.
This also follows from our more general Lemma~\ref{BMS-It_sem-afsoet}. The above intuition is formalized in the corresponding proof.

\medskip
Definition~\ref{def:bms} may be stated more generally for arbitrary finite or infinite sets of states $Q$, instead of $\N$. Moreover, $s_M^\ast$ and $h_M^\ast$ can easily be generalized to functions taking also a starting state $s$ as input by
\begin{align*}
s_M^\ast(s,\epsilon)&=s; \\
h_M^\ast(s,\sigma\concat x)&=h_M(s_M^\ast(s,\sigma),x); \\
s_M^\ast(s,\sigma\concat x)&=s_M(s_M^\ast(s,\sigma),x).
\end{align*}

\bigskip
We now clarify what we mean by successful learning.

\begin{defn}
\label{def:convergence}
Let $M$ be a learner and $\CalL$ a collection of languages.
\begin{enumerate}
\item Let $L \in \CalL$ be a language and $T \in \Txt(L)$ a text for $L$ presented to $M$.
\begin{enumerate}
\item We call $h=(h_t)_{t\in\N} \in \ISeq{\Omega}$, where $h_t := M(T[t])$ for all $t \in \N$, the \emph{learning sequence of $M$ on $T$}.
\item $M$ \emph{learns $L$ from $T$ in the limit}, for short $M$ $\Ex$-learns $L$ from $T$ or $\Ex(M,T)$, if there exists $t_0\in\N$ such that $W_{h_{t_0}}=\cnt(T)$ and $\forall t \geq t_0 \; \left(\; h_t\neq\,? \; \Rightarrow \; h_t=h_{t_0}\;\right)$.
\end{enumerate}
\item $M$ \emph{learns $\CalL$ in the limit}, for short $M$ $\Ex$-learns $\CalL$, if $\Ex(M,T)$ for every $L\in\CalL$ and every $T \in \Txt(L)$.
\end{enumerate}
\end{defn}

\begin{defn}
Let $\CalL$ be a collection of languages.
$\CalL$ is \emph{learnable in the limit} or \emph{$\Ex$-learnable}, if there exists a learner $M$ that $\Ex$-learns $\CalL$.
\end{defn}

$\Ex$-learning is the most common definition for successful learning in inductive inference and corresponds to the notion of identifiability in the limit by \cite{Gol:j:67}, where the learner eventually decides on one correct hypotheses.

\bigskip
In our investigations, the most important additional requirement on a successful learning process for a $\BMS$-learner is to use finitely many states only, as stated in the following definition.

\begin{defn}
Let $M$ be a $\BMS$-learner and $T\in\Txt$. We say that $M$ \emph{uses finitely many memory states on $T$}, for short $\BMS_\ast(M,T)$, if $\{\, s_M^\ast(T[t]) \mid t \in \N \,\}$ is finite.
\end{defn}

We list the most common additional requirements regarding the learning sequence, which may tag a learning process. For this we first recall the notion of consistency of a sequence with a set.

\begin{defn}
\label{def:consfA}
Let $f \in \Seq{\Sigma}\cup\ISeq{\Sigma}$ and $A \subseteq \Sigma$. We define
\begin{align*}
\Comp(f,A) \quad &:\Leftrightarrow \quad \cnt(f) \subseteq A
\end{align*}
and say \emph{$f$ is consistent with $A$}.
\end{defn}

The listed properties of the learning sequence have been at the center of different investigations. Studying how they relate to one another did begin in \cite{kotzing2016map}, \cite{kotzing2016towards}, \cite{jain2016role} and \cite{As-Koe-Sei2018_informants}.

\begin{defn}
\label{def:LearningRestrictions}
Let $M$ be a learner, $T \in \Txt$ and $h=(h_t)_{t\in\N} \in \ISeq{\Omega}$ the learning sequence of $M$ on $T$, i.e. $h_t=M(T[t])$ for all $t\in\N$..
We write
\begin{enumerate}
        \item $\Conv(M,T)$ (\cite{angluin1980inductive}), if $M$ is \emph{conservative on $T$}, i.e., for all $s, t$ with $s \leq t$ holds \\
        $\Comp(T[t],W_{h_s}) \;\Rightarrow\; h_s = h_t.$
        \item  $\Dec(M,T)$ (\cite{Osh-Sto-Wei:j:82:strategies}),
        if $M$ is \emph{decisive on $T$}, i.e.,
        for all $r, s, t$ with $r \leq s \leq t$ holds \\
        $W_{h_r} = W_{h_t} \;\Rightarrow\; W_{h_r} = W_{h_s}.$
        \item $\Caut(M,T)$ (\cite{STL1}),
		if $M$ is \emph{cautious on $T$}, i.e.,
		for all $s, t$ with $s \leq t$ holds
        $\neg W_{h_t} \subsetneq W_{h_s}.$
		\item $\WMon(M,T)$ (\cite{j-mniifp-91},\cite{Wie:c:91}), if $M$ is \emph{weakly monotonic on $T$}, i.e., for all $s, t$ with $s \leq t$ holds
        $\Comp(T[t], W_{h_s})
        \;\Rightarrow\; W_{h_s} \subseteq W_{h_t}.$
        \item $\Mon(M,T)$ (\cite{j-mniifp-91},\cite{Wie:c:91}), if $M$ is \emph{monotonic on $T$}, i.e., for all $s, t$ with $s \leq t$ holds
        $W_{h_s} \cap \cnt(T) \subseteq W_{h_t}\cap\ps(T).$
        \item  $\SMon(M,T)$ (\cite{j-mniifp-91},\cite{Wie:c:91}), if $M$ is \emph{strongly monotonic on $T$}, i.e., for all $s, t$ with $s \leq t$ holds
        $W_{h_s} \subseteq W_{h_t}.$
        \item  $\NU(M,T)$ (\cite{Bal-Cas-Mer-Ste-Wie:j:08}), if $M$ is \emph{non-U-shaped on $T$}, i.e., for all $r, s, t$ with $r \leq s \leq t$ holds \linebreak
        $W_{h_r} = W_{h_t} = \cnt(T) \;\Rightarrow\; W_{h_r} = W_{h_s}.$
        \item  $\SNU(M,T)$ (\cite{Cas-Moe:j:11:optLan}), if $M$ is \emph{strongly non-U-shaped on $T$}, i.e., for all $r, s, t$ with $r \leq s \leq t$ holds
        $W_{h_r} = W_{h_t} = \cnt(T) \;\Rightarrow\; h_r = h_s.$
        \item  $\SDec(M,T)$ (\cite{kotzing2016map}), if $M$ is \emph{strongly decisive on $T$}, i.e., for all $r, s, t$ with $r \leq s \leq t$ holds
        $W_{h_r} = W_{h_t} \;\Rightarrow\; h_r = h_s.$
				\item  $\Wb(M,T)$ (\cite{kotzing2016towards}), if $M$ is \emph{witness-based on $T$}, i.e., for all $r, t$ such that for some $s$ with $r < s \leq t$
				holds $h_r\neq h_s$ we have
        $\cnt(T[s])\cap (W_{h_t}\setminus W_{h_r}) \neq\varnothing.$
\end{enumerate}%
\end{defn}


It is easy to see that $\Conv(M,T)$ implies $\SNU(M,T)$ and $\WMon(M,T)$;
$\SDec(M,T)$ implies \linebreak $\Dec(M,T)$ and $\SNU(M,T)$;
$\SMon(M,T)$ implies $\Caut(M,T), \Dec(M,T), \Mon(M,T)$, $\WMon(M,T)$
and finally $\Dec(M,T)$, $\WMon(M,T)$ and $\SNU(M,T)$ imply $\NU(M,T)$.
Figure~\ref{fig:TxtBMSEx} includes the resulting backbone with arrows indicating the aforementioned implications.
Further, $\Wb(M,T)$ implies $\Conv(M,T)$, $\SDec(M,T)$ and $\Caut(M,T)$.

In order to characterize what successful learning means, these predicates may be combined with the explanatory convergence criterion. For this, we let
$\Delta := \{ \,\Caut, \Conv, \Dec, \SDec, \WMon, \Mon,\SMon,$ $\NU,\SNU, \mathbf{T} \,\}$ denote the set of \emph{admissible learning restrictions}, with $\mathbf{T}$ standing for no restriction.
Further, a \emph{learning success criterion} is a predicate being the intersection of the convergence criterion $\Ex$ with arbitrarily many admissible learning restrictions.
This means that the sequence of hypotheses has to converge and in addition has the desired properties.
Therefore, the collection of all learning success criteria is
$$\{ \: \bigcap_{i=0}^n \delta_i \cap \Ex \mid n \in \N, \forall i \leq n (\delta_i \in \Delta) \}.$$

Note that plain explanatory convergence is a learning success criterion by letting $n=0$ and $\delta_0=\mathbf{T}$.

\smallskip
We refer to all $\delta \in \{\Caut,\Cons,\Dec,\Mon,\SMon,\WMon,\NU,\mathbf{T}\}$ also as \emph{semantic} learning restrictions, as they do not require the learner to settle on exactly one hypothesis.

In order to state observations about how two ways of defining learning success relate to each other, the learning power of the different settings is encapsulated in notions $[\alpha\Txt\beta]$ defined as follows.

\begin{defn}
\label{def:learncrit}
Let $\alpha$ be a property of partial computable functions from the set $\Seq{\Sigma}$ to $\N$ and $\beta$ a learning success criterion.
We denote by $[\alpha\Txt\beta]$ the set of all collections of languages that are $\beta$-learnable from texts by a learner $M$ with the property $\alpha$.
\end{defn}

At position $\alpha$, we restrict the set of admissible learners for example by requiring them to be iterative or finite bounded memory states learners. The properties stated at position $\alpha$ are \emph{independent of learning success}.
In contrast, at position $\beta$, the required learning behavior and convergence criterion are specified.
We do not use separators in the notation to stay consistent with established notation in the field that was inspired by \cite{Jai-Osh-Roy-Sha:b:99:stl2}.

\medskip
For example, a collection of languages $\CalL$ lies in $[\BMS\Txt\BMS_\ast\Conv\Ex]$ if and only if there is a bounded memory states learner $M$ conservatively explanatory learning every $L \in \CalL$ from texts while using only finite memory.
More concretely, for all $L \in\CalL$ and for every text $T\in\Txt(L)$ we have $\Conv(M,T)$, $\BMS_\ast(M,T)$ and $\Ex(M,T)$.

\medskip
The proofs of Lemmata~\ref{BMS-It_sem-afsoet} and \ref{StateDecisive} employ the following property of learning requirements and learning success criteria, that applies to all such considered in this paper.

\begin{defn}
\label{def:afsoet}
Denote the set of all unbounded and non-decreasing functions by $\Simu$, i.e., $$\Simu := \{ \,\simu: \N \to \N \mid \forall x \in \N \,\exists t \in \N \colon \simu(t) \geq x \text{ and } \forall t \in \N \colon \simu(t+1) \geq \simu(t) \,\}.$$
Then every $\simu \in \Simu$ is a so called \emph{admissible simulating function}.

\smallskip
A predicate $\beta$ on pairs of learners and texts \emph{allows for simulation on equivalent text}, if for all simulating functions $\simu \in \Simu$, all texts $T, T' \in \Txt$ and all learners $M, M'$ holds:
Whenever we have $\cnt(T'[t]) = \cnt(T[\simu(t)])$ and $M'(T'[t]) = M(T[\simu(t)])$ for all $t \in \N$, from $\beta(M,T)$ we can conclude $\beta(M', T')$.
\end{defn}

Intuitively, as long as the learner $M'$ conjectures $h'_t=h_{\simu(t)}=M(T[\simu(t)])$ at time $t$ and has, in form of $T'[t]$, the same data available as was used by $M$ for this hypothesis, $M'$ on $T'$ is considered to be a simulation of $M$ on $T$.

\bigskip
It is easy to see that all learning success criteria considered in this paper allow for simulation on equivalent text.

\section{Relations between Semantic Learning Requirements}

We show that bounded memory states learners and iterative learners have equal learning power, when a semantic learning requirement is added to the standard convergence criterion. With this the results from iterative learning are transferred to this setting.

\bigskip
The following lemma formally establishes the equal learning power of iterative and $\BMS_\ast$-learning for all learning success criteria but $\Conv$, $\SDec$ and $\SNU$. We are going to prove in Section~\ref{SyntacticLearning} that even for the three aforementioned non-semantic additional requirements we obtain the same behavior.

\begin{lem}
\label{BMS-It_sem-afsoet}
Let $\delta$ allow for simulation on equivalent text.
\begin{enumerate}
\item \label{BMSIt} We have $[\Txt\BMS_\ast\delta\Ex]\supseteq[\It\Txt\delta\Ex].$
\item \label{second} If $\delta$ is semantic then $[\Txt\BMS_\ast\delta\Ex]=[\It\Txt\delta\Ex].$
\end{enumerate}
\end{lem}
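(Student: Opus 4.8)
The plan is to prove the two inclusions separately: the forward one~\eqref{BMSIt} is a direct embedding valid for every $\delta$ that allows for simulation on equivalent text, while the semantic equality~\eqref{second} needs the equivalent-text simulation advertised in the introduction. For~\eqref{BMSIt} I would turn an iterative learner $M$ into a $\BMS$-learner $M'$ by letting the states play the role of the previous hypothesis. Fix a computable bijection $b\colon\Omega\to\N$ with $b(?)=0$ and set $s_{M'}(q,x)=b(h_M(b^{-1}(q),x))$ and $h_{M'}(q,x)=h_M(b^{-1}(q),x)$. A straightforward induction gives $s_{M'}^\ast(\sigma)=b(h_M^\ddagger(\sigma))$ and hence $h_{M'}^\ast(\sigma)=M(\sigma)$ for every $\sigma\in\Seq{\Sigma}$, so $M$ and $M'$ produce literally the same learning sequence on every text. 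Consequently $\delta$ and $\Ex$ transfer verbatim (this is the special case $T'=T$, $\simu=\mathrm{id}$ of simulation on equivalent text), and on any text on which $M$ converges the states $b(h_t)$ take only finitely many values, so $\BMS_\ast(M',T)$ holds and $M'$ witnesses membership in $[\Txt\BMS_\ast\delta\Ex]$.

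For~\eqref{second} the nontrivial inclusion is $[\Txt\BMS_\ast\delta\Ex]\subseteq[\It\Txt\delta\Ex]$. Given a $\BMS$-learner $M$ that on every text for a target language uses finitely many states and satisfies $\delta\Ex$, I would build an iterative learner $N$ that on its input text $T'$ simulates $M$ on an equivalent text $T$. The memory that $N$ pads into its conjectures is a finite subgraph $G$ of the state-transition diagram: for every state visited so far it stores the datum that first drove $M$ into that state, i.e.\ the tree of first-visit edges rooted at the start state $0$. Because $M$ visits only finitely many states on any text for $L$, only finitely many first-visit edges ever appear, so $G$ stabilises; the state $M$ currently occupies in the simulation is then recovered from $G$ as its last first-visited vertex. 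At each step $N$ outputs the $\BMS$-hypothesis $M$ produces at that vertex, padded by $\langle G\rangle$; since both $G$ and the tracked vertex converge, $N$'s conjecture converges syntactically, and $N$ is genuinely iterative because $G$ together with the incoming datum determines both the next conjecture and the next value of $G$.

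The equivalent text $T$ is reconstructed from $G$ so that the simulated walk never returns to an already-visited state other than the most recent one: a new datum either discovers a fresh state (advancing the frontier, which is recorded in $G$) or is absorbed at the current frontier, while the first-visit data stored in $G$ let $N$ replay the unique path from $0$ to the frontier. This yields a simulating function $\simu\in\Simu$ with $\cnt(T'[t])=\cnt(T[\simu(t)])$ and $W_{N(T'[t])}=W_{M(T[\simu(t)])}$ for all $t$. Since $T\in\Txt(L)$ and $M$ learns $L$ from every one of its texts, $M(T[s])$ converges to a correct index, whence $N$ converges to a correct (padded) index; and because $\delta$ is semantic only the languages $W_{M(T[\simu(t)])}$ matter, so the padding is harmless and simulation on equivalent text delivers $\delta(N,T')$ from $\delta(M,T)$. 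Together with part~\eqref{BMSIt} this gives the claimed equality.

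The hard part will be the construction of the equivalent text and the verification that the tracked state converges. The tension is that $N$ must feed every element of $L$ to the simulation (to keep the contents equal) yet may neither buffer unboundedly nor reset the simulation to state $0$, so steering the walk into an eventually stationary state using only the recorded first-visit tree is delicate; finiteness of the visited state set is exactly what makes an eventually state-monotone presentation possible, and this is where I expect the real work to lie. A second point to get right is the restriction to semantic $\delta$: the padding by $\langle G\rangle$ alters conjectures syntactically, so for $\Conv$, $\SDec$ and $\SNU$ the argument breaks down, which is precisely why those three restrictions are deferred to Section~\ref{SyntacticLearning}.
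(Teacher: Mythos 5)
Your proposal is correct and takes essentially the same approach as the paper: part~(1) by the identical hypothesis-as-state bijection, and the hard inclusion of part~(2) by padding a record of first-visited states together with their first-visit data (the paper's $\visit(\sigma)$) into the conjectures, recovering the current state as the last first-visited vertex, and pumping the stored data into the equivalent text so the simulated walk returns to the frontier whenever an old state is revisited, with stabilisation of the record giving $\Ex$-convergence and simulation on equivalent text transferring the semantic restriction $\delta$. The only cosmetic deviation is that your ``tree of first-visit edges'' is in fact a chain, since new states are always discovered from the frontier, which is exactly the paper's sequence $\visit$, and the pumped path runs from the revisited state to the frontier (a suffix of your path from $0$) rather than from $0$ itself.
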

\begin{proof}
While \ref{BMSIt} and ``$\supseteq$'' in \ref{second} are easy to verify by using the hypotheses as states, the other inclusion in \ref{second} is more challenging.
The iterative learner constructed from the $\BMS$-learner $M$ uses the hypotheses of $M$ on an equivalent text and additionally pads a subgraph of the translation diagram of $M$ to it.

(1) and ``$\supseteq$'' of (2).
Let $M$ be an iterative learner, i.e. there is a computable function $h_M: \Omega\times\Sigma\to\Omega$ with $M=h_M^\ddagger$ where $h_M^\ddagger(\epsilon)=\:?$ and $h_M^\ddagger(\sigma\concat x)=h_M(h_M^\ddagger(\sigma),x)$ for all $\sigma\in\Sigma^{<\omega}$ and $x\in\Sigma$.
We show that $M$ can be obtained as a state driven learner by using the hypotheses also as states. For this, we fix the computable bijection $\pi: Q \to \Omega$ with computable inverse, defined by $\pi(0)=\:?$ and $\pi(i)=i-1$ for all $i>0$. Then the learner $N=h^\ast_N$ with $\langle s_N,h_N \rangle(q,x)=(\pi^{-1}(h_M(\pi(q),x)),h_M(\pi(q),x))$ is as wished because the state corresponds via $\pi$ directly to the last hypothesis of $M$ and so the learners $M$ and $N$ act identically.

Formally, this follows by an induction showing for every $\tau\in\Sigma^{<\omega}$ that $s_N^\ast(\tau)=\pi^{-1}(M(\tau))$ and moreover if $|\tau|>0$ we have $N(\tau)=M(\tau)$.
The claim holds for $\tau=\epsilon$, because of
$s_N^\ast(\epsilon)=0=\pi^{-1}(M(\epsilon))$.
In case there are $\sigma\in\Sigma^{<\omega}$ and $x\in\Sigma$ such that $\tau=\sigma\concat x$, we may assume $s_N^\ast(\sigma)=\pi^{-1}(M(\sigma))$ and obtain
\begin{align*}
s_N^\ast(\tau) &\stackrel{\mathrm{Def.}\, s_N^\ast}{=} s_N(s_N^\ast(\sigma),x)\stackrel{s_N^\ast(\sigma)=\pi^{-1}(M(\sigma))}{=} s_N(\pi^{-1}(M(\sigma)),x)
\stackrel{\mathrm{Def.}\, s_N}{=} \pi^{-1}(h_M(M(\sigma),x))
\stackrel{M = h_M^\ddagger}{=} \pi^{-1}(M(\tau)),\\
N(\tau) &\stackrel{N=h_N^\ast}{=} h_N(s_N^\ast(\sigma),x)
\stackrel{s_N^\ast(\sigma)=\pi^{-1}(M(\sigma))}{=} h_N(\pi^{-1}(M(\sigma)),x)
\stackrel{\mathrm{Def.}\, h_N}{=} h_M(M(\sigma),x))
\stackrel{M = h_M^\ddagger}{=} M(\tau). 
\end{align*}
That $M$ in case of learning success uses only finitely many states follows immediately from the $\Ex$-convergence, implying to output only finitely many pairwise distinct hypotheses.

\medskip
``$\subseteq$'' of (2).
Let $\CalL \in [\Txt\BMS_\ast\delta\Ex]$ be witnessed by the learner $M$, i.e., there is $\langle s_M, h_M \rangle: Q\times\Sigma\to Q\times\Omega$ such that $M=h_M^\ast$. 
Further, we may assume that for all $L\in\CalL$ and $T\in\Txt(L)$ the set of visited states $s_M^\ast[\{T[t]\mid t\in\N\}]$ is finite and $M$ $\delta\Ex$-learns $L$ from $T$.

Intuitively, the iterative learner $M_\It$ uses the hypotheses of $M$ on an equivalent text $\hat{T}$ and additionally pads a subgraph $\visit(\sigma)$ of the translation diagram of the $\BMS$-learner $M$ to it. In $\visit(\sigma)$, which is being build after having observed $\sigma$, we keep track of all states visited so far together with the datum which caused the first transfer to the respective state. In order to assure $\Ex$-convergence, we do not change the subgraph in case the new state had already been visited after some proper initial segment of $\sigma$ was observed. From $\visit(\sigma)$ we can reconstruct the last first-time-visited state $s^\ast_{M_\It}(\sigma)$ of $M$ while observing the equivalent sequence corresponding to $\sigma$. Moreover, we build the equivalent text $\hat{T}$ by inserting a path of already observed data leading to state $s^\ast_{M_\It}(\sigma)$, in case this is necessary to prevent the learner $M_\It$ from returning to a previously visited state but the last one. With this strategy we make sure that the last state is the one we are currently in, as keeping track of the current state while observing the original text may destroy the $\Ex$-convergence.

Formally, we define functions $\pump: \Seq{\Sigma}\setminus\{\epsilon\} \times\N \to \Seq{\Sigma}$ and $\visit: \Seq{\Sigma} \to \Seq{\Sigma}$ by
\begin{align*}
\pump(\visit(\sigma),x) &= \begin{cases}
x, &\text{if } s_M(s_{M_\It}^\ast(\sigma),x)\notin \states[\visit(\sigma)]; \\
x\concat \Mypath(s_M(s_{M_\It}^\ast(\sigma),x),s^\ast_{M_\It}(\sigma)), &\text{otherwise;}
\end{cases} \\
\visit(\epsilon) &= \epsilon;\\
\visit(\sigma\concat x) &= \begin{cases}
\visit(\sigma)\concat \langle s_M(s_{M_\It}^\ast(\sigma),x),x \rangle, &\text{if } s_M(s_{M_\It}^\ast(\sigma),x)\notin \states[\visit(\sigma)]; \\
\visit(\sigma), &\text{otherwise;}
\end{cases}
\end{align*}
with the application of the projection to the first coordinate extracting the set of visited states.
Moreover, for states $s_0,s_1\in S$ with $\Mypath(s_0,s_1)$ we refer to the unique sequence $(\sigma(i), \sigma(i+1), \ldots, \sigma(j))$ of second coordinates in $\visit(\sigma)$ such that $(s_0,\sigma(i))\concat\ldots\concat(s_1,\sigma(j))$ is an intermediate sequence in $\visit(\sigma)$.
The learner $M_\It$ is now defined by
$$M_\It(\sigma\concat x)=\pad(h_M^\ast(s^\ast_{M_\It}(\sigma),\pump(\visit(\sigma),x)),\visit(\sigma\concat x)).$$
By construction
$s^\ast_{M_\It}(\sigma)=\last(\states(\visit(\sigma)))$ 
and therefore the hypothesis of $M_\It$ on some sequence $\sigma\concat x$ is always only based on $\visit(\sigma)$ and $x$, which makes $M_\It$ iterative.

The text $\hat{T}=\bigcup_{t\in\N}\tau_t$ with $\tau_0=\epsilon$ and $\tau_{t+1}=\tau_t\concat\pump(\visit(T[t]),T(t))$ is a text for $L$. Let $\simu: \N\to\N, t \mapsto |\tau_t|$ be the corresponding simulating function.
As for all $t\in\N$ holds $\cnt(T[t])=\cnt(\hat{T}[\simu(t)])$ and $M_\It(T[t])=\pad(M(\hat{T}[\simu(t)]),\visit(T[t]))$, we obtain $W_{M_\It(T[t])}=W_{M(\hat{T}[\simu(t)])}$ and because $\delta$ is semantic and afsoet, we conclude the semantic $\delta$-convergence of $M_\It$ on $T$.
Having in mind that $M$ uses only finitely many pairwise distinct states $\visit(T[t])$ stabilizes.
Paired with the $\Ex$-convergence of $M$ on $\hat{T}$ we conclude the $\Ex$-convergence of $M_\It$ on $T$.
\end{proof}

Note that obviously the proof is identical for learning from positive and negative information, introduced by \cite{Gol:j:67}.
In this learning model the information the learner receives is labeled, like in binary classification, and has to be complete in the limit.
See \cite{As-Koe-Sei2018_informants} for a formal definition, a summary of results on this model and the complete map.

\bigskip
With Lemma~\ref{BMS-It_sem-afsoet} the following results transfer from learning with iterative learners and it remains to investigate the relations to and between the non-semantic requirements $\Conv, \SDec$ and $\SNU$. 

\begin{thm}
\begin{enumerate}
\item $[\Txt\BMS_\ast\NU\Ex]=[\Txt\BMS_\ast\Ex]$
\item $[\Txt\BMS_\ast\Dec\Ex]=[\Txt\BMS_\ast\WMon\Ex]=[\Txt\BMS_\ast\Caut\Ex]=[\Txt\BMS_\ast\Ex]$
\item $[\Txt\BMS_\ast\Mon\Ex]\subsetneq[\Txt\BMS_\ast\Ex]$
\item $[\Txt\BMS_\ast\SMon\Ex]\subsetneq[\Txt\BMS_\ast\Mon\Ex]$
\end{enumerate}
\end{thm}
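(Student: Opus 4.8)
The plan is to reduce each of the four claims to the corresponding statement about iterative learning and then appeal to the established iterative-learning map. The key observation is that every restriction occurring in the statement---$\NU$, $\Dec$, $\WMon$, $\Caut$, $\Mon$, $\SMon$, as well as the empty restriction $\mathbf{T}$---is semantic. Hence part~(\ref{second}) of Lemma~\ref{BMS-It_sem-afsoet} applies to each, giving
$$[\Txt\BMS_\ast\delta\Ex]=[\It\Txt\delta\Ex] \qquad \text{for all } \delta \in \{\NU,\Dec,\WMon,\Caut,\Mon,\SMon,\mathbf{T}\}.$$
After this translation the entire theorem becomes a list of assertions living purely on the iterative side.

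For (1) I would chain the two relevant instances of the lemma (for $\delta=\NU$ and $\delta=\mathbf{T}$) with the equality $[\It\Txt\NU\Ex]=[\It\Txt\Ex]$ of \cite{Cas-Moe:j:08:NUIt}, obtaining $[\Txt\BMS_\ast\NU\Ex]=[\It\Txt\NU\Ex]=[\It\Txt\Ex]=[\Txt\BMS_\ast\Ex]$. For (2) the same translation reduces the stated chain of equalities to $[\It\Txt\Dec\Ex]=[\It\Txt\WMon\Ex]=[\It\Txt\Caut\Ex]=[\It\Txt\Ex]$, recorded in the iterative map of \cite{jain2016role}. For (3) and (4) I would pass to the iterative side in the same way, reading off $[\It\Txt\Mon\Ex]\subsetneq[\It\Txt\Ex]$ and $[\It\Txt\SMon\Ex]\subsetneq[\It\Txt\Mon\Ex]$ from that same map.

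The one thing worth noting is that Lemma~\ref{BMS-It_sem-afsoet} transports strictness for free: because it delivers genuine equalities of collection-classes, any class witnessing a proper inclusion on the iterative side is automatically a witness on the $\BMS_\ast$ side, so (3) and (4) require no new construction, and their trivial inclusions ``$\subseteq$'' follow directly from the backbone ($\SMon$ implies $\Mon$, and adjoining any restriction only shrinks the learnable collection). The only real effort, then, is bookkeeping---checking that each restriction is semantic and that the matching iterative fact is on record---rather than any fresh combinatorial or recursion-theoretic argument, exactly as anticipated by the remark that these results ``transfer from learning with iterative learners.''
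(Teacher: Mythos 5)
Your proposal is correct and is essentially the paper's own proof: the paper likewise invokes Lemma~\ref{BMS-It_sem-afsoet}(\ref{second}) for the semantic restrictions $\NU$, $\Dec$, $\WMon$, $\Caut$, $\Mon$, $\SMon$ and $\mathbf{T}$ and then cites the corresponding iterative results (\cite[Theorem~2]{Cas-Moe:j:08:NUIt} and \cite[Theorems~10, 3 and 2]{jain2016role}). Your added remark that the lemma's genuine equalities transport the strict inclusions in (3) and (4) for free is exactly the (implicit) bookkeeping the paper relies on.
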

\begin{proof}
The respective results for iterative learners can be found in
\cite[Theorem~2]{Cas-Moe:j:08:NUIt},
\cite[Theorem~10]{jain2016role},
\cite[Theorem~3]{jain2016role}
and \cite[Theorem~2]{jain2016role}.
\end{proof}

\section{Relations to and between Syntactic Learning Requirements}
\label{SyntacticLearning}

The following lemma establishes that we may assume $\BMS_\ast$-learners to never go back to withdrawn states.
This is essential in almost all of the following proofs.
It can also be used to simplify the proof of Lemma~\ref{BMS-It_sem-afsoet}.

\begin{lem}
\label{StateDecisive}
Let $\beta$ be a learning success criterion allowing for simulation on equivalent text and $\CalL \in [\Txt\BMS_\ast\beta]$.
Then there is a $\BMS$-learner $N$ such that $N$ never returns to a withdrawn state and $\BMS_\ast\beta$-learns $\CalL$ from texts.
\end{lem}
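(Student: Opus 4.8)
The plan is to recycle the $\visit$/$\pump$ machinery from the proof of Lemma~\ref{BMS-It_sem-afsoet}, but to read off from it a $\BMS$-learner rather than an iterative one. Fix a $\BMS_\ast$-learner $M$ with $M=h_M^\ast$ witnessing $\CalL \in [\Txt\BMS_\ast\beta]$, so that on every $L\in\CalL$ and $T\in\Txt(L)$ the learner $M$ visits only finitely many states and $\beta(M,T)$ holds. I would define $N$ to be the $\BMS$-learner whose state after reading $\sigma$ is the visit-record $\visit(\sigma)$ itself, and whose output is
$$N(\sigma\concat x)=h_M^\ast(s^\ast_{M_\It}(\sigma),\pump(\visit(\sigma),x)),$$
where $s^\ast_{M_\It}(\sigma)=\last(\states(\visit(\sigma)))$ is the last first-time-visited state. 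Since $\pump(\visit(\sigma),x)$, the state $s^\ast_{M_\It}(\sigma)$, and the successor record $\visit(\sigma\concat x)$ all depend only on the pair $(\visit(\sigma),x)$, this is a legitimate $\BMS$-learner, with state transition $\visit(\sigma)\mapsto\visit(\sigma\concat x)$ and the displayed hypothesis function. The only difference to $M_\It$ is that $N$ omits the padding of $\visit(\sigma\concat x)$ onto the conjecture, keeping that information in its state instead.

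First I would verify the never-return property, which is now immediate from the definition of $\visit$: the record $\visit(\sigma\concat x)$ either equals $\visit(\sigma)$ (when $s_M(s^\ast_{M_\It}(\sigma),x)$ already lies in $\states[\visit(\sigma)]$) or extends it by one entry (when the next state is fresh). Hence along any text the sequence of $N$-states, the records $\visit(T[t])$, is non-decreasing for the prefix order and grows only by appending genuinely new states, so a record once left is never seen again. Moreover the states appearing in a record are pairwise distinct, since a state is appended only when it lies outside $\states[\visit(\sigma)]$; this distinctness is exactly what makes the rerouting well defined, because the recorded first-visit transitions form a single path $q_0\to q_1\to\cdots\to q_k=s^\ast_{M_\It}(\sigma)$, and for a withdrawn state $q'=q_j$ the recorded data $\Mypath(q',s^\ast_{M_\It}(\sigma))$ drives $M$ from $q'$ back to the current state $q_k$.

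Next I would transfer correctness via simulation on equivalent text, exactly as in Lemma~\ref{BMS-It_sem-afsoet}. Set $\hat T=\bigcup_{t}\tau_t$ with $\tau_0=\epsilon$ and $\tau_{t+1}=\tau_t\concat\pump(\visit(T[t]),T(t))$, and $\simu(t)=|\tau_t|$. Every pumped block begins with $T(t)$ and reinserts only already-observed data, so $\hat T\in\Txt(L)$, $\cnt(T[t])=\cnt(\hat T[\simu(t)])$, and $\simu\in\Simu$ (it is non-decreasing and strictly increasing, hence unbounded). By the displayed definition $N(T[t])=M(\hat T[\simu(t)])$ for all $t$. Since $\beta$ allows for simulation on equivalent text, $\beta(M,\hat T)$ yields $\beta(N,T)$; in particular $N$ $\Ex$-converges on $T$ to the index $M$ settles on along $\hat T$. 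For $\BMS_\ast(N,T)$ note that $M$, being $\BMS_\ast$ on the text $\hat T$, visits only finitely many distinct states there; the recorded states form a subset of these, and $\visit(T[t])$ is non-decreasing, so the record stabilizes and $N$ assumes only finitely many states on $T$.

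The main obstacle is precisely the interplay that $\pump$ resolves: one cannot simply mint a fresh state whenever $M$ would revisit an old one, since on target texts $M$ may return to old states infinitely often and this would break $\BMS_\ast$; instead one must route $M$ back to the most recently discovered state while preserving both the content seen and the hypothesis semantics. The one combinatorial point that must be checked is that such a routing path always exists inside $\visit(\sigma)$, which follows from the distinctness and linear-path structure of the recorded states noted above.
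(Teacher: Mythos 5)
Your proposal is correct and is essentially the paper's own proof: the paper's learner $N$ uses states $\langle s,\visit\rangle$ with hypothesis $h_M^\ast(s,x\concat\Mypath(\visit,s_M(s,x)))$ on revisits, which coincides with your $h_M^\ast(\last(\states(\visit(\sigma))),\pump(\visit(\sigma),x))$, since the first component $s=\pr_1(\last(\visit))$ is redundant bookkeeping. The paper likewise concludes by letting $N$ simulate $M$ on the pumped equivalent text exactly as in Lemma~\ref{BMS-It_sem-afsoet}, so your verification of the never-return property, the linear-path invariant, and the transfer of $\beta$ and $\BMS_\ast$ matches its argument (in fact spelling out details the paper leaves implicit).
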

\begin{proof}
Let $M$ be a $\BMS$-learner with $\CalL \in \Txt\BMS_\ast\beta(M)$.
We employ a construction similar to the one in the proof of Theorem~\ref{BMS-It_sem-afsoet}.
Again for $\visit\in\Seq(Q\times\Sigma)$ with pairwise distinct first coordinates and $s'\in\pr_1[\visit]$ by $\Mypath(\visit,s')$ we denote the unique sequence of second coordinates $x_0\concat\ldots\concat x_\xi$ of $\visit$ such that $(s',x_0)\concat\ldots\concat(\last(\pr_1[\visit]),x_\xi)$ is a final segment of $\visit$.
The $\BMS$ learner $N$ is initialized with state $\pad(0,(0,\#))$ and for every $s\in Q$, $\visit\in\Seq(Q\times\Sigma)$ and $x\in\Sigma$ defined by 
\begin{align*}
s_N(\langle s,\visit\rangle,x)&= \begin{cases}
\langle s,\visit \rangle, &\text{if } s_M(s,x)\in \pr_1[\visit]; \\
\langle s_M(s,x),\visit\concat(s_M(s,x),x)\rangle, &\text{otherwise;}
\end{cases} \\
h_N(\langle s,\visit\rangle,x)&=\begin{cases}
h_M^\ast(s,x\concat\Mypath(\visit,s_M(s,x))), &\text{if } s_M(s,x)\in\pr_1[\visit]; \\
h_M(s,x), &\text{otherwise.}
\end{cases}
\end{align*}
By construction $N$ is a $\BMS_\ast$-learner, as it only uses states $\langle s,\visit\rangle$ where $s=\pr_1(\last(\visit))$ is a state used by $M$ and for every $s\in Q$, visited by $M$, there is exactly one sequence $\visit\in\Seq(Q\times\Sigma)$ such that $\langle s,\visit \rangle$ is used by $N$.
The learner $N$ simulates $M$ on an equivalent text just as in the proof of Theorem~\ref{BMS-It_sem-afsoet}.
\end{proof}

%
%

We show that strongly monotonically $\BMS_\ast$-learnability does not imply strongly non-U-shapedly $\BMS_\ast$-learnability.

\begin{thm}
\label{BMS-SMonSNU}
$[\Txt\BMS_\ast\SMon\Ex] \not\subseteq [\Txt\BMS_\ast\SNU\Ex]$
\end{thm}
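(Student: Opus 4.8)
The plan is to produce a single class $\CalL$ and to verify the two halves separately: that a strongly monotonic $\BMS_\ast$-learner succeeds on $\CalL$, and that no strongly non-U-shaped one does. For the positive half I would exploit that $\SMon$ is a \emph{semantic} restriction: by the second part of Lemma~\ref{BMS-It_sem-afsoet} we have $[\Txt\BMS_\ast\SMon\Ex]=[\It\Txt\SMon\Ex]$, so it suffices to design $\CalL$ together with an \emph{iterative} strongly monotonic learner, or equivalently to exhibit a $\BMS_\ast$-learner that uses its finitely many states only to record the current ``phase'' and whose successive conjectures describe an increasing chain of languages converging up to the target. The crucial feature to build in is that strong monotonicity constrains only the semantics $W_{h_t}$ of the conjectures, so this witnessing learner is free to perform the \emph{syntactic} jumps that will be shown unavoidable for any correct learner, following the $\SMon$-skeleton of \cite[Theorem~5]{jain2016role}.

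For the core negative half I would assume towards a contradiction that some $\BMS$-learner $N$ witnesses $\CalL\in[\Txt\BMS_\ast\SNU\Ex]$, and first normalise $N$ via Lemma~\ref{StateDecisive}: without loss of generality $N$ never returns to a withdrawn state. Since $N$ uses only finitely many states on every text for a language in $\CalL$, the never-return property forces the state sequence $s_N^\ast(T[t])$ to \emph{stabilise} on some final state $s_\infty$ (it cannot oscillate, as leaving a state withdraws it, and it cannot introduce new states forever); from that point on the conjecture $N(T[t])=h_N(s_\infty,T(t-1))$ depends only on the last datum read. This is the structural handle that replaces the ``bounded memory in the hypothesis'' feature of the iterative proofs in \cite[Theorem~5.7]{Cas-Koe:j:11:memLess} and \cite[Theorem~5]{jain2016role}: after stabilisation $N$ behaves like a one-symbol lookup table, and for $\Ex$-convergence this table must, on every symbol presented infinitely often, return the single correct index for $L$ (or the symbol $?$).

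The diagonalisation itself is an ORT argument in the style of \cite[Theorem~5.7]{Cas-Koe:j:11:memLess}. Using the Operator Recursion Theorem I would lay out fresh data $x_0,x_1,\dots$ and a supply of indices for candidate languages, and drive $N$ along a text for an initial content that belongs to $\CalL$ until it reaches a conjecture correct for that content with its state stabilised at some $s_\infty$. Then I would use the freedom of ORT to create two extensions of the observed prefix that $N$ cannot tell apart at the moment of commitment: because the post-stabilisation conjecture is a function of the last datum alone, I can arrange that the index $N$ is forced to emit in order to stay correct on one extension is incorrect for the other, while both extensions are placed in $\CalL$ and agree on the content on which $N$ has already committed. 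Replaying the appropriate extension then either exhibits times $r\le s\le t$ with $W_{h_r}=W_{h_t}=\cnt(T)$ but $h_r\ne h_s$, i.e.\ a strong U-shape contradicting $\SNU(N,T)$, or else forces $N$ to abandon a correct index and thereby fail $\Ex$-convergence. In both cases $N$ does not $\SNU\Ex$-learn the relevant member of $\CalL$, the desired contradiction.

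The main obstacle is the last paragraph: coordinating the ORT bookkeeping so that simultaneously (i) every language entering the construction genuinely lies in $\CalL$ and stays $\SMon$-learnable, (ii) the finite-state, never-return normalisation really does pin $N$ to its single-symbol behaviour at the decisive moment, and (iii) the forced index conflict yields an \emph{honest} strong U-shape rather than a merely semantic change, which $\SNU$ would permit. Making (ii) and (iii) hold at once is precisely where the ``$\SNU$ restricts $\BMS_\ast\Ex$'' technique must be fused with the strongly monotonic construction of \cite[Theorem~5]{jain2016role}, and it is what makes Lemma~\ref{StateDecisive} indispensable.
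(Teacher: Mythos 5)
Your global architecture matches the paper's: a self-learning class $\CalL=\Txt\BMS_\ast\SMon\Ex(M)$, normalisation of the hypothetical $\SNU$-learner $N$ via Lemma~\ref{StateDecisive}, and an ORT diagonalisation that first locks $N$ on some finite content and then forces a mind change followed by a return to a correct hypothesis. (Your positive half via Lemma~\ref{BMS-It_sem-afsoet}(2) is fine, and you correctly avoid the unsound shortcut of transferring the iterative negative result, which would fail since $[\It\Txt\SNU\Ex]\subsetneq[\Txt\BMS_\ast\SNU\Ex]$ by Theorem~\ref{thm:itBMSDifferSNU}.) But your decisive step rests on a false structural claim: that after the state sequence stabilises at $s_\infty$, ``$N$ behaves like a one-symbol lookup table,'' so that the two conflicting extensions can be engineered through last-datum behaviour alone. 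Stabilisation at $s_\infty$ holds only along the pause tail of the particular text you ran $N$ on; the moment you feed a fresh datum $x$, the state moves to $s_N(s_\infty,x)$, which may be a brand-new (hence not withdrawn) state, so $N$ regains memory without making any mind change. Concretely, a $\BMS$-learner can keep its hypothesis on every \emph{single} datum appended after the locking segment, yet change its mind only after seeing two distinct new data: such an $N$ learns both the finite and the infinite language in your two-extension dichotomy and escapes every trap built from single data. This is exactly why single-datum forcing suffices against iterative learners in Theorem~\ref{thm:itBMSDifferSNU} --- there $N(\rho\concat x)=N(\rho)$ means the datum leaves no trace at all --- but fails against $\BMS$-learners.

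The paper closes precisely this hole with machinery your sketch does not contain and which the lookup-table handle cannot replace: the predicates $\MC(k,\sigma)$ and $\NoMC(k,\sigma)$ are quantified over \emph{all finite sequences} $\sigma$ over the growing alphabet $\ran(a[i])\cup\{\#\}$; a well-order $<_a$ selects the minimal mind-change-inducing $\sigma_0$ (whose existence is derived, as in your dichotomy, from $N$ having to learn both $W_{e_1(t_0)}$ and $W_{e_2(t_0)}$); and the c.e.\ sets $E_k$, kept enumerable via the Blum bounds $\Phi_p(k,\sigma)$, pad the target so that in the end $W_{e_1(t_0)}=W_{e_2(t_0)}=L$ with $\cnt(\sigma_0)\subseteq L$. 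This padding is what makes the forced mind change on $a[\hat t\,]\concat b(t_0)\concat\#^{\ell_{t_0}}\concat\sigma_0\concat\#^\infty$ an honest syntactic U-shape on a text \emph{for a language in} $\CalL$, rather than a hypothesis change on data outside the target. Your point (iii) names this worry, but the proposal offers no mechanism for it; supplying the sequence quantification, the minimality choice, and the $E_k$-padding is the actual technical content of the theorem, and the single-datum handle cannot be patched into a substitute.
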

\begin{proof}
We define a self-learning $\BMS$-learner $M$ and with a tailored ORT-argument there can not be a $\BMS$-learner strongly non-U-shapedly learning all languages that $M$ learns strongly monotonically.

Consider the $\BMS$-learner $M$ initialized with state $\langle \,?,\langle\varnothing\rangle \rangle$ and $h_M$ and $s_M$ for every $e\in\Omega$, $D\subseteq\N$ finite and $x\in\Sigma$ defined by:
\begin{align*}
s_M(\langle e,\langle D \rangle \rangle,x)&=\begin{cases}
\langle e,\langle D\rangle \rangle, &\text{if } x\in D\cup\{\#\} \;\vee\; \varphi_x(e)= e;\\
\langle \varphi_x(e),\langle D\cup\{x\}\rangle\rangle, &\text{else if } \varphi_x(e)\neq e;\\
\uparrow, &\text{otherwise.}
\end{cases} \\
h_M(\langle e,\langle D \rangle\rangle,x)&= \begin{cases}
e, &\text{if } x\in D\cup\{\#\} \;\vee\; \varphi_x(e)= e;\\
\varphi_x(e), &\text{else if } \varphi_x(e)\neq e;\\
\uparrow, &\text{otherwise.}
\end{cases}
\end{align*}
Thus, $M$ is self-learning by interpreting the datum $x$ as a program and the conjectures are generated by applying this program to the last hypothesis. (We identify $\varphi_x$ with the function obtained by using a bijection from $\N$ to $\Omega$.)
Further, in form of the states, the last hypothesis as well as exactly the data that already lead to a mind change of $M$ is stored.

Let $\CalL=\Txt\BMS_\ast\SMon\Ex(M)$. 

Assume there is a $\BMS_\ast$-learner $N$ with hypothesis generating function $h_N$ and state transition function $s_N$, such that $\CalL\subseteq\Txt\BMS_\ast\SNU\Ex(N)$. By Lemma~\ref{StateDecisive} we assume that $N$ does not return to withdrawn states.

We are going to obtain a language $L\in \CalL$ not strongly non-U-shapedly learned by $N$ by applying 1-1 ORT and thereby refering to the c.e.~predicates $\MC$ and $\NoMC$ defined for fixed $a,b\in\totalCp$, all $k\in\N$ and $\sigma\in\Seq{\Sigma}$ with the help of the formulas $\psi_{k}(\ell)$, expressing that the $\BMS_\ast$-learner $N$ does not perform a mind- or state-change on the text $a[k]\concat b(k)\concat\#^\infty$ after having observed $a[k]\concat b(k)\concat\#^{\ell}$. The predicates state that $N$ does converge and (not) make a mind-change when observing $\sigma$ after having observed $a[k]\concat a(k)\concat \#^{\ell_k}$, with $\ell_k$ being the least $\ell$ with $\psi_k(\ell)$.
\begin{align*}
\psi_{k}(\ell) &\Leftrightarrow N(a[k]\concat b(k)\concat \#^{\ell})= N(a[k]\concat b(k)\concat \#^{\ell+1}) \;\wedge\; s^\ast_N(a[k]\concat b(k)\concat \#^{\ell})= s^\ast_N(a[k]\concat b(k)\concat \#^{\ell+1});\\
\NoMC(k,\sigma) &\Leftrightarrow \exists {\ell_k}\in\N\:(\,\psi_{k}(\ell_k) \:\wedge\: \forall \ell<\ell_k \:\neg \psi_k(\ell) \:\wedge\: N(a[k]\concat b(k)\concat \#^{{\ell_k}}\concat \sigma)\!\downarrow \;= N(a[k]\concat b(k)\concat \#^{{\ell_k}})\,); \\
\MC(k,\sigma) &\Leftrightarrow \exists {\ell_k}\in\N\:(\,\psi_{k}(\ell_k) \:\wedge\: \forall \ell<\ell_k \:\neg \psi_k(\ell) \:\wedge\: N(a[k]\concat b(k)\concat \#^{{\ell_k}}\concat \sigma)\!\downarrow \;\neq N(a[k]\concat b(k)\concat \#^{{\ell_k}}) \,).
\end{align*}
Now, let $p$ be an index for the program which on inputs $k\in\N$ and $\sigma\in\Seq{\Sigma}$ searches for $\ell_k$. In case $\ell_k$ exists, the program encoded in $p$ runs $N$ on $a[k]\concat b(k) \concat \#^{\ell_k} \concat \sigma$. Hence, $\Phi_p(k,\sigma)$ stands for the number of computation steps the program just described needs on input $k,\sigma$. By the definition of $p$ we have $\Phi_p(k,\sigma)\!\uparrow$ if and only if $\ell_k\!\uparrow$ or $N(a[k]\concat b(k) \concat \#^{\ell_k} \concat \sigma)\!\uparrow$. 

We abbreviate with $\Seq(a,i)=\Seq{}_{\leq i}(\ran(a[i])\cup\{\#\})$ the set of all finite sequences over $\ran(a[i])\cup\{\#\}$ with length at most $i$. Moreover, we employ a well-order $<_{a}$ on $\Seq(\ran(a))$ by letting $\rho <_a \sigma$ if and only if for the unique $i_\rho$ such that $\rho\in\Seq(a,i_\rho+1)\setminus\Seq(a,i_\rho)$ holds $\sigma\notin\Seq(a,i_\rho+1)$ or else $\sigma\notin\Seq(a,i_\rho)$ and at the same time $\langle\rho\rangle<\langle\sigma\rangle$.
For constructing $L$ we will also make use of the c.e.~sets
$$E_k=\{\, a(i) \mid \forall \sigma\in\Seq(a,i) \:\NoMC(k,\sigma) \;\vee\; (\,\exists \sigma \forall \rho<_a\sigma \:\NoMC(k,\rho) 
\:\wedge\: \Phi_p(k,\sigma)> i\,)\,\}.$$
It is easy to see that $E_k$ is finite and equals $\{\,a(i)\mid i < \max(\{i_{\sigma_0}\}\cup\{ \Phi_p(k,\sigma) \mid \sigma\leq_a\sigma_0\})\,\}$ if and only if for $\sigma_0\in\Seq(\ran(a))$ holds $\MC(k,\sigma_{0})$ and $\NoMC(k,\sigma)$ for all $\sigma<_a\sigma_{0}$. 
Otherwise $E_k=\ran(a)$.

By 1-1 ORT there are $a,b,e_1,e_2\in\totalCp$ with pairwise disjoint ranges and $e_0\in\N$, such that
\begin{align*}
\varphi_{a(i)}(e)&=\begin{cases}
e_0 &\text{if } e\in\{?,e_0\};\\
e_2(k) &\text{else if } e=e_1(k) \text{ for some } k\leq i;\\
e, &\text{otherwise;}
\end{cases} \\
\varphi_{b(k)}(e)&=\begin{cases}
e_1(k) &\text{if } e\in\{?,e_0\};\\
e, &\text{otherwise;}\\
\end{cases}\\
W_{e_0}&=\begin{cases}
\ran(a[t_0]) &\text{if $t_0$ is minimal with } \forall t\geq t_0\,(\, N(a[t])=N(a[t_0]) \wedge s_N^\ast(a[t])=s_N^\ast(a[t_0])\,); \\
\ran(a), &\text{no such $t_0$ exists.};
\end{cases}\\
W_{e_1(k)}&=
\cnt(a[k])\cup\{b(k)\}\cup\begin{cases}
E_k &\text{if } \exists \sigma_0 \,(\,\MC(k,\sigma_{0}) \,\wedge\, \forall \sigma<_a\sigma_{0}\, \NoMC(k,\sigma)\,);\\
\varnothing, &\text{otherwise;}
\end{cases} \\
W_{e_2(k)}&=
\cnt(a[k])\cup\{b(k)\}\cup E_k.
\end{align*}
As $W_{e_0}\in\CalL$ by construction, $N$ has to learn it and hence $t_0$ exists.

We first observe that there exists $\sigma_0$ such that $\MC(t_0,\sigma_{0})$ and $\NoMC(t_0,\sigma)$ for all $\sigma<_a\sigma_{0}$.
Assume otherwise, then either $\ell_{t_0}\!\!\uparrow$ or for all $\sigma\in\Seq(\ran(a))$ holds $\NoMC(t_0,\sigma)$ or for $\sigma_0$ minimal with $\neg \NoMC(t_0,\sigma_{0})$ we have $N(a[t_0]\concat b(t_0)\concat \#^{{\ell_{t_0}}}\concat \sigma_{0})\!\uparrow$.
Anyhow, this would mean $E_{t_0}=\ran(a)$.
By the definition of $e_1$, $e_2$ and our converse assumption we obtain $W_{e_1(t_0)}=\cnt(a[t_0])\cup\{b(t_0)\}$ and $W_{e_2(t_0)}=\ran(a)\cup\{b(t_0)\}$.
It can be easily checked that $W_{e_1(t_0)}$ and $W_{e_2(t_0)}$ are strongly monotonically learned by $M$ and hence lie in $\CalL$. As $N$ has to learn $W_{e_1(t_0)}$ from the text $a[t_0]\concat b(t_0)\concat \#^\infty$, we know $\ell_{t_0}\!\!\downarrow$ and moreover $W_{N(a[t_0]\concat b(t_0)\concat \#^\ell)}=W_{e_1(t_0)}$ holds for all $\ell\geq \ell_{t_0}$.
Moreover, $N$ has to learn $W_{e_2(t_0)}$ from all the texts $a[t_0]\concat b(t_0)\concat \#^{\ell_{t_0}}\concat \sigma\concat a$ with $\sigma\in\Seq(\ran(a))$.
Thus, $N(a[t_0]\concat b(t_0)\concat \#^{\ell_{t_0}}\concat \sigma)\!\downarrow$ for all $\sigma\in\Seq(\ran(a))$.
Because of our converse assumption, the only option left is $\NoMC(t_0,\sigma)$ for all $\sigma\in\Seq(\ran(a))$.
Since this is equivalent to $N(a[t_0]\concat b(t_0)\concat \#^{\ell_{t_0}} \concat \sigma)=N(a[t_0]\concat b(t_0)\concat \#^{\ell_{t_0}})$ 
for all $\sigma\in\Seq(\ran(a))$, $N$ cannot learn both $W_{e_1(t_0)}$ and $W_{e_2(t_0)}$. Hence $\sigma_0$ exists.

By the choice of $t_0$ and $\sigma_0$ we obtain $E_{t_0}=\cnt(a[t_1])$ for $t_1=\max(\{i_{\sigma_0}\}\cup\{ \Phi_p(k,\sigma) \mid \sigma\leq_a\sigma_0 \})\in\N$. Let $\hat{t}=\max\{t_0,t_1\}$ and $L=\cnt(a[\hat{t}])\cup\{b(t_0)\}$.  Then $W_{e_1(t_0)}=W_{e_2(t_0)}=L \in \CalL$ and by construction of $E_{t_0}$ we have $\Cons(\sigma_{0},L)$.
Because of $\hat{t}\geq t_0$, we obtain $s_N^\ast(a[\hat{t}])=s_N^\ast(a[t_0])$. With this and the choice of $t_0$ we conclude
$N(a[\hat{t}]\concat b(t_0)\concat \#^\ell)=N(a[t_0]\concat b(t_0)\concat \#^\ell)$ 
for all $\ell\in\N$.
Further, as $N$ learns $L$ from the text $a[\hat{t}]\concat b(t_0)\concat \#^\infty$ we have $W_{N(a[\hat{t}]\concat b(t_0)\concat \#^{\ell_{t_0}})}=L$.
On the other hand by $\MC(t_0,\sigma_{0})$ we obtain 
$N(a[\hat{t}]\concat b(t_0)\concat \#^{\ell_{t_0}})\neq N(a[\hat{t}]\concat b(t_0)\concat \#^{\ell_{t_0}}\concat \sigma_{0})$, which forces $N$ to perform a syntactic U-shape on the text $a[\hat{t}]\concat b(t_0)\concat \#^{\ell_{t_0}}\concat \sigma_{0}\concat \#^\infty$ for $L$.
\end{proof}

\bigskip
For inferring the relations between the syntactic learning requirements $\SNU$, $\SDec$ and $\Conv$, we refer to $\Wb$.
All these criteria are closely related to strongly locking learners, which we define in the following.

It was observed by \cite{Blu-Blu:j:75} that the learnability of every language $L$ by a learner $M$ is witnessed by a sequence $\sigma$, consistent with $L$, such that $M(\sigma)$ is an index for $L$ and no extension of $\sigma$ consistent with $L$ will lead to a mind-change of $M$.
Such a sequence $\sigma$ is called \emph{(sink-)locking sequence for $M$ on $L$}.
For a similar purpose as ours \cite{jain2016role} introduced strongly locking learners.
A learner $M$ acts strongly locking on a language $L$, if for every text $T$ for $L$ there is an initial segment $\sigma$ of $T$ that is a locking sequence for $M$ on $L$.

The proof of the following theorem generalizes the construction of a conservative and strongly decisive iterative learner from a strongly locking iterative learner in \cite[Theorem 8]{jain2016role}.
With it we obtain in the Corollary thereafter, that all non-semantic learning restrictions coincide.

\begin{thm}
\label{StronglyLockingImpliesWb}
Let $\CalL$ be a set of languages $\BMS_\ast\Ex$-learned by a strongly locking $\BMS$-learner. Then $$\CalL \in [\Txt\BMS_\ast\Wb\Ex].$$
\end{thm}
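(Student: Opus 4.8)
The plan is to adapt the construction behind \cite[Theorem~8]{jain2016role} from the iterative to the bounded-memory-states setting: from a strongly locking $\BMS$-learner $M$ that $\BMS_\ast\Ex$-learns $\CalL$ I would build a witness-based $\BMS_\ast$-learner $N$ for $\CalL$. First I would apply Lemma~\ref{StateDecisive} to assume without loss of generality that $M$ never returns to a withdrawn state, so that on every text $T$ for a language $L\in\CalL$ its state sequence stabilises; strong locking then yields a locking sequence $\sigma\inseg T$ with $W_{M(\sigma)}=L$ on which $M$ makes no further mind change along any $L$-consistent extension. Since $\Wb$ implies $\Conv$, $\SDec$ and $\Caut$, the conjectures of $N$ must have non-shrinking effective content that grows only through data witnessing each change; this is exactly the behaviour I aim to engineer.

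For the construction, $N$ stores in its state a tracked run of $M$ (on an equivalent text, as in the proofs of Lemmata~\ref{BMS-It_sem-afsoet} and \ref{StateDecisive}), a currently committed index $p$, and a canonical index for the finite set $C$ of data that have so far justified a commitment. At each step $N$ outputs not $p$ itself but a padded index $q_{p,C}$ for a \emph{gated} enumeration: $q_{p,C}$ enumerates all of $C$ at once and then dovetails the enumeration of $W_p$, admitting a further element $y\in W_p$ only after reconfirming, by continuing the simulation of $M$ from the committed state on the elements enumerated so far, that $M$ does not abandon $p$. The learner $N$ changes its commitment to a new pair $(p',C')$ only upon seeing a datum $x$ with $x\notin W_{q_{p,C}}$ (detected through the approximations $W^t_{q_{p,C}}$), at which point $x$ is placed into $C'$, guaranteeing $x\in W_{q_{p',C'}}\setminus W_{q_{p,C}}$ and hence supplying the witness required for $\Wb(N,T)$.

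For correctness I would argue that, because $M$ is strongly locking, once $T$ passes the true locking sequence the gate for the correct committed index stays permanently open, so the gated set equals $L$ and no datum ever falls outside it; thus $N$ eventually stops changing its commitment and converges to a correct index, giving $\Ex$. Finitely many states are used since $M$ is $\BMS_\ast$ and the bookkeeping $(p,C)$ changes only finitely often on a learned text, giving $\BMS_\ast(N,T)$ and hence $\CalL\in[\Txt\BMS_\ast\Wb\Ex]$. The main obstacle is the over-commitment problem: a priori $N$ might fasten onto an index $p$ with $W_p\supsetneq L$, for which no witnessing datum outside $W_p$ could ever appear. The gated enumeration is precisely what defuses this — the effective conjecture $W_{q_{p,C}}$ admits only elements that $M$'s locking behaviour keeps confirming, so a spurious superset hypothesis is never fully realised and a datum outside the gated set can still surface and trigger a justified change. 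Threading this gating through the $\BMS_\ast$ state discipline, so that the bookkeeping stays within finitely many states on learned texts, is the delicate point; once it is in place the theorem follows, and together with $\Wb\Rightarrow\Conv,\SDec,\Caut$ it yields the subsequent corollary that all non-semantic restrictions coincide for $\BMS_\ast$-learning.
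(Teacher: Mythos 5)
Your overall architecture is the right one and in fact matches the paper's two-stage construction: invoke Lemma~\ref{StateDecisive}, output \emph{gated} (starved) versions of $M$'s conjectures keyed to the committed hypothesis--state pair, accumulate a finite witness set $C$ patched into every later conjecture, and use strong locking plus state stabilisation to argue that the final gate stays open, so the gated conjecture equals $L$ and $N$ converges. However, there is a genuine defect in the mechanics of your mind-change trigger. You let $N$ change its commitment ``upon seeing a datum $x$ with $x\notin W_{q_{p,C}}$ (detected through the approximations $W^t_{q_{p,C}}$)'' --- but non-membership in a c.e.\ set is a $\Pi_1$ condition: the approximations $W^t_{q_{p,C}}$ can confirm membership in finite time, never non-membership, so the learner as described cannot computably decide when to change its mind. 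The paper resolves this by reversing the causality: the conjecture changes exactly when the \emph{one-step reaction of $M$} from the committed state $s$ on the incoming datum $x$ changes hypothesis or state, i.e.\ $h_M(s,x)\neq e$ or $s_M(s,x)\neq s$ --- a computable test --- and the gate is engineered so that this automatically entails $x\notin W_{f(e,s)}$: the index $f(e,s)$ enumerates $W_e^t$ only so long as \emph{every} $x\in W_e^t$ satisfies $h_M(s,x)=e$ and $s_M(s,x)=s$. Conservativeness is then a consequence of the gate's definition, not something the learner must detect.

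Two further points, both tied to the same repair, are missing from your gate. First, your gate admits $y\in W_p$ after checking only that $M$ ``does not abandon $p$'' along a \emph{cumulative} simulation in enumeration order; the paper's check is element-wise with the state pinned to the committed $s$ and includes \emph{state}-preservation. Without the state clause, a datum can pass your gate yet move $M$ to a new state; since the gated conjecture depends on the committed state, $N$ must then change its (syntactic) conjecture on a trigger that lies \emph{inside} the old gated set, and $\Wb$ fails. Moreover, a cumulative simulation makes admission depend on the enumeration order of $W_p$, so passing the gate does not guarantee that $x$, fed fresh from the committed state as in the real run, causes no change. Second, once the trigger is $M$'s reaction rather than detected non-membership, a \emph{repeated} occurrence of an already-patched witness $x\in C$ could re-trigger a change even though $x\in W_{q_{p,C}}$; the paper therefore replaces such repeats by $\#$ in the simulated run (the $x\in\pr_2[\MC]$ branch of $h$ and $s_N$, together with the thinned text $T'$), ensuring each datum causes at most one mind change. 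Your sketch contains neither the state clause nor this suppression; with both added, your construction becomes essentially the paper's proof.
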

\begin{proof}
Let $\CalL \in [\Txt\BMS_\ast\Ex]$ be learned by the strongly locking learner $M$.
By Lemma~\ref{StateDecisive} we may assume that $M$ does not return to withdrawn states.

We proceed in two steps. First we construct a learner $M'$ conservatively $\BMS_\ast\Ex$-learning at least $\CalL$ in a strong sense, i.e.,
\begin{equation}
\label{eq:Def_SConv}
\forall \sigma \in\Seq{\Sigma} \:\forall x\in\Sigma \:(\, M'(\sigma\concat x)\neq M'(\sigma) \;\Rightarrow\; x\notin W_{M'(\sigma)}\,).
\end{equation}
That we require the last datum to violate consistency with the former hypothesis fits the setting of $\BMS$-learners and is also called locally conservative by \cite{Jai-Lan-Zil:c:06}.
Second, with such a learner at hand, we are going to construct a learner $N$ which $\BMS_\ast\Ex$-learns $\CalL$ in a witness-based fashion.
We will do this by keeping track of all data having caused a mind-change so far.
More concretely, we alter the text by excluding mind-change data causing another mind-change and make sure that the witness for the mind-change is contained in all future hypotheses.

\medskip
For defining the strongly conservative learner $M'$, we employ a one-one function $f:\N\times Q\to\Omega$ satisfying
$$
W_{f(e,s)}=\bigcup_{t\in\N} \begin{cases}
W_e^t, &\text{if } \forall x\in W_e^t (\,h_M(s,x)=e\:\wedge\:s_M(s,x)=s\,);\\
\varnothing, &\text{otherwise}
\end{cases}
$$
for every hypothesis $e\in\N\subseteq\Omega$ and state $s\in Q$.
The existence of $f$ is granted by the smn theorem.
Thus, $f$ takes into account only the initial part of $W_e$ not necessary to possibly justify a mind-change or state-change later on.
Now define for all $\sigma\in\Seq{\Sigma}$
\begin{align*}
M'(\sigma)=f(M(\sigma),s^\ast_M(\sigma)).
\end{align*}
As $M$ never returns to withdrawn states and behaves strongly locking while $\BMS_\ast\Ex$-learning $\CalL$, $M'$ also $\Ex$-learns $\CalL$.
For $\sigma\neq\epsilon$ the values of $M(\sigma)$ and $s^\ast_M(\sigma)$ only depend on $s^\ast_M(\sigma^-)$ and $\last(\sigma)$ and hence $M'$ is a $\BMS_\ast$-learner with $s_{M'}=s_M$.
Moreover, by construction it is conservative in the strong sense defined in ($\ref{eq:Def_SConv}$).

\medskip
We now define the witness-based learner $N$. In addition to thinning out the hypotheses of $M'$, as we did with the hypotheses of $M$ when constructing $M'$ from $M$, we patch all data causing mind-changes to it. This data is stored in the states used by $N$. Further, we only alter our old hypothesis in case we can guarantee the existence of a witness justifying the possible mind-change. To do this in a computable way, we need to store also the last hypothesis of $M'$ in the states of $N$.

For every datum $x\in\Sigma$, data-sequence $\sigma\in\Seq{\Sigma}$, hypothesis $e\in\N\subseteq \Omega$ and every finite sequence $\MC$ of natural numbers, interpreted as pairs of hypotheses and data, 
we define a state transition function $s_N$, 
auxiliary hypothesis generating function $M$, 
recursive function $g:\N^2\to\Omega$ and the learner $N$ by
\begin{align*}
h(\langle s, \langle\MC\rangle \rangle,x)&=
\begin{cases}
h_{M'}(s,\#), &\text{if } x\in \pr_2[\MC];\\
h_{M'}(s,x), &\text{otherwise;}
\end{cases}\\
s_N(\langle s,\langle\MC\rangle\rangle,x)&=
\begin{cases}
\langle s_{M'}(s,\#), \langle\MC\rangle\rangle, &\text{if } x\in \pr_2[\MC] \;\wedge\; h_{M'}(s,\#)= \pr_1(\last(\MC));\\
\langle s_{M'}(s,\#), \langle\MC\concat \langle h_{M'}(s,\#),\#\rangle\rangle\rangle, &\text{if } x\in \pr_2[\MC] \;\wedge\; h_{M'}(s,\#)\neq \pr_1(\last(\MC));\\
\langle s_{M'}(s,x), \langle\MC\rangle\rangle,  &\text{else if } h_{M'}(s,x)= \pr_1(\last(\MC));\\
\langle s_{M'}(s,x), \langle\MC\concat \langle h_{M'}(s,x),x\rangle\rangle\rangle, &\text{otherwise;}
\end{cases} \\
W_{g(e,\langle s,\langle \MC \rangle\rangle)} &= \pr_2[\MC] \cup W_e;\\
N(\sigma\concat x)&=
\begin{cases}
?, &\text{if } h^\ast(\sigma\concat x)=\:?;\\
g(h^\ast(\sigma\concat x),s_N^\ast(\sigma\concat x)), &\text{else if } h^\ast(\sigma\concat x)\neq \pr_1(\last(\mathrm{decode}(\pr_2(s^\ast_N(\sigma)))))); \\
N(\sigma), &\text{otherwise.}
\end{cases}
\end{align*}
Thus with the help of $g$ the data stored in the second coordinates of $\MC$ is patched to the language encoded in $e$. Further, $N$ only makes a mind-change if $h^\ast$ does, as $h^\ast(\sigma)=\pr_1(\last(\mathrm{decode}(\pr_2(s^\ast_N(\sigma))))))$. The learner $h^\ast$ behaves like $M'$ on the text, in which every datum repeatedly causing a mind-change is replaced by the pause symbol.

Let $L\in\CalL$ and $T\in\Txt(L)$. It is easy to see that for the text $T'$ recursively defined by
\begin{align*}
T'(t)&=
\begin{cases}
\#, &\text{if } \exists s<t \:(\, T(s)=T(t)\:\wedge\: M'(T'[s]\concat T(s))\neq M'(T'[s]) \,);\\
T(t), &\text{otherwise,}
\end{cases}
\end{align*}
holds $h^\ast(T[t])=M'(T'[t])$ for all $t\in\N$. This follows with a simultaneous induction also showing $\pr_1(s_N^\ast(T[t]))=s_{M'}^\ast(T'[t])$.
Hence $h^\ast$ on $T$ behaves like $M'$ on $T'\in\Txt(L)$.

Because $M'$ $\Ex$-converges on $T'$, it makes only finitely many mind-changes and uses only finitely many states, which implies that $N$ also only uses finitely many states.
Let $e=M'(T'[t_0])$ be the final correct hypothesis of $M'$ on $T'$ with $t_0\in\N$ chosen appropriately. Because $M'$ never returns to withdrawn states, the states of $N$ also stabilize.
Moreover, $N(T[t_0])$ has to be correct since $\pr_2[\MC]\subseteq W_e$.

As already mentioned, $N$ learns every $L\in\CalL$ witness-based because $M'$ is strongly conservative. Every time $N$ performs a mind-change on $T$, so does $M'$ on $T'$. Therefore, there is a responsible datum $x$ which was not in the former hypothesis of $M'$ and also has not occured so far, as no datum in $T'$ causes more than one mind-change. This datum $x$ will be contained in all languages hypothesized by $N$ in the future.
\end{proof}

With the latter theorem it is straightforward to observe that in the $\BMS_\ast\Ex$-setting conservative, strongly decisive and strongly non-U-shaped $\Ex$-learning are equivalent.

\begin{cor}
\label{BMS-ConvSDecSNU}
We have $[\Txt\BMS_\ast\Conv\Ex]=[\Txt\BMS_\ast\SDec\Ex]=[\Txt\BMS_\ast\SNU\Ex]$.
\end{cor}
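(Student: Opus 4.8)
The plan is to route everything through the witness-based criterion $\Wb$ together with Theorem~\ref{StronglyLockingImpliesWb}, using only the predicate implications recorded after Definition~\ref{def:LearningRestrictions}. Since $\Wb(M,T)$ implies both $\Conv(M,T)$ and $\SDec(M,T)$, and since each of $\Conv(M,T)$ and $\SDec(M,T)$ implies $\SNU(M,T)$, the \emph{same} learner witnessing membership in the stronger criterion witnesses membership in the weaker ones. Hence I would first record the two immediate inclusion chains
$$[\Txt\BMS_\ast\Wb\Ex]\subseteq[\Txt\BMS_\ast\Conv\Ex]\subseteq[\Txt\BMS_\ast\SNU\Ex]$$
and
$$[\Txt\BMS_\ast\Wb\Ex]\subseteq[\Txt\BMS_\ast\SDec\Ex]\subseteq[\Txt\BMS_\ast\SNU\Ex].$$
It then suffices to close the loop by proving $[\Txt\BMS_\ast\SNU\Ex]\subseteq[\Txt\BMS_\ast\Wb\Ex]$, which would also show that all four criteria collapse to $[\Txt\BMS_\ast\Wb\Ex]$.

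For that last inclusion, the key step — and the only non-bookkeeping part — is to show that any learner $M$ which $\BMS_\ast\SNU\Ex$-learns $\CalL$ is already \emph{strongly locking} on every $L\in\CalL$; Theorem~\ref{StronglyLockingImpliesWb} then upgrades $M$ to a witness-based learner and yields $\CalL\in[\Txt\BMS_\ast\Wb\Ex]$. To verify strong locking, I would fix $L\in\CalL$ and a text $T$ for $L$, and let $r$ be the first index with $W_{M(T[r])}=L$; such an $r$ exists by $\Ex$-convergence and $M(T[r])$ is a genuine index. I claim $T[r]$ is a locking sequence. Given any $\tau$ with $\cnt(T[r]\concat\tau)\subseteq L$, extend $T[r]\concat\tau$ to a text $T'$ for $L$; then $T'[r]=T[r]$, so $W_{M(T'[r])}=L=\cnt(T')$, and by $\Ex$-convergence on $T'$ there is some $t\geq r+|\tau|$ with $W_{M(T'[t])}=L$. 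Applying $\SNU(M,T')$ to the triple $r\leq r+|\tau|\leq t$ gives $M(T'[r+|\tau|])=M(T'[r])$, i.e. $M(T[r]\concat\tau)=M(T[r])$. As $\tau$ ranges over all consistent extensions, none triggers a mind-change, so $T[r]$ locks; since $T$ was an arbitrary text for $L$, the learner $M$ is strongly locking on $L$.

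The main obstacle to watch is precisely this strong-locking argument: it is tempting to read off locking from the behaviour of $M$ along the single given text $T$, but $\SNU$ only enforces syntactic stability \emph{between two semantically correct hypotheses}, so one genuinely has to re-embed each consistent extension $T[r]\concat\tau$ into a fresh text $T'$ for $L$ and invoke $\SNU$ there. Everything else is routine: the initial $?$-outputs merely repeat the previous hypothesis and do not affect the choice of the first correct index $r$, and $\SNU$ even rules out a $?$ at position $r+|\tau|$ since it forces equality with the genuine index $M(T[r])$. Combining the two inclusion chains with the closing inclusion $[\Txt\BMS_\ast\SNU\Ex]\subseteq[\Txt\BMS_\ast\Wb\Ex]$ then gives $[\Txt\BMS_\ast\Conv\Ex]=[\Txt\BMS_\ast\SDec\Ex]=[\Txt\BMS_\ast\SNU\Ex]$, as claimed.
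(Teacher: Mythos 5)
Your proposal is correct and follows essentially the same route as the paper: both reduce everything to the witness-based criterion via Theorem~\ref{StronglyLockingImpliesWb}, using the implications $\Wb \Rightarrow \Conv, \SDec$ and $\Conv, \SDec \Rightarrow \SNU$, and closing the loop with $[\Txt\BMS_\ast\SNU\Ex]\subseteq[\Txt\BMS_\ast\Wb\Ex]$ through strong locking. The only difference is that the paper merely asserts that a strongly non-U-shaped learner is strongly locking, whereas you spell out that argument (re-embedding each consistent extension of $T[r]$ into a fresh text $T'$ for $L$ and applying $\SNU(M,T')$ to the triple $r \leq r+|\tau| \leq t$), which is a valid and welcome elaboration of the step the paper leaves implicit.
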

\begin{proof}
On the one hand a conservative or strongly decisive learning behavior is also a strongly non-U-shaped learning behavior.
On the other hand, a learner behaving strongly non-U-shaped proceeds strongly locking and, by Theorem~\ref{StronglyLockingImpliesWb}, from a strongly locking learner we may construct a learner with at least equal learning power, acting witness-based and hence also conservatively and strongly decisively.
\end{proof}


By \cite[Theorem~2]{jain2016role} and Lemma~\ref{BMS-It_sem-afsoet} (\ref{BMSIt}) we obtain
$$[\Txt\BMS_\ast\Conv\Ex]\not\subseteq[\Txt\BMS_\ast\SMon\Ex].$$
From this we conclude with Theorem~\ref{BMS-SMonSNU} and Corollary~\ref{BMS-ConvSDecSNU} the following incomparability
$$[\Txt\BMS_\ast\Conv\Ex]\perp [\Txt\BMS_\ast\SMon\Ex].$$
 
Similarly, with \cite[Theorem~3]{jain2016role} and again Lemma~\ref{BMS-It_sem-afsoet} (\ref{BMSIt}) we obtain $[\Txt\BMS_\ast\Conv\Ex]\not\subseteq[\Txt\BMS_\ast\Mon\Ex]$. As Theorem~\ref{BMS-SMonSNU} implies $[\Txt\BMS_\ast\Mon\Ex]\not\subseteq[\Txt\BMS_\ast\SNU\Ex]$, with Corollary~\ref{BMS-ConvSDecSNU} follows $$[\Txt\BMS_\ast\Conv\Ex]\perp [\Txt\BMS_\ast\Mon\Ex].$$

\bigskip
Because Theorem~\ref{BMS-SMonSNU} also reproves $[\Txt\BMS_\ast\SNU\Ex]\subsetneq[\Txt\BMS_\ast\Ex]$, first observed in \cite[Th.~3.10]{Cas-Koe:j:11:memLess}, we completed the map for $\BMS_\ast\Ex$-learning from texts. An overview is depicted in Figure~\ref{fig:TxtBMSEx}.

\begin{figure}[h]
\begin{center}
\begin{tikzpicture}[above,sloped,shorten <=2mm,, shorten >=2mm]

\node[shape=rectangle,rounded corners,fill=lightgray] at (-5,-1) {$\Txt\BMS_\ast\Ex$};
\drawBackboneTwo

\node[line width=3pt, rounded corners, fit=(nu) (nothing) (caut) (wmon) (dec)] (Fit1) {};

\node[line width=3pt, rounded corners, fit=(snu) (sdec) (conv)] (Fit2) {};

\path[draw=lightgray,line width=3pt, rounded corners] (Fit1.north west) -- ([yshift=-3mm] Fit1.south west) -- ([xshift=-20mm,yshift=-3mm] Fit1.south east) -- ([xshift=3mm,yshift=20mm] Fit1.south east)-- ([xshift=3mm] Fit1.north east) -- cycle;

\path[draw=lightgray,line width=3pt, rounded corners] (Fit2.south east) -- (Fit2.north east) -- ([xshift=20mm] Fit2.north west) -- ([yshift=-20mm] Fit2.north west)-- (Fit2.south west) -- cycle;

\node[draw=lightgray,line width=3pt, rounded corners, minimum width= 15mm,minimum height = 8mm, anchor=center] at (mon) {};
\node[draw=lightgray,line width=3pt, rounded corners, minimum width= 15mm,minimum height = 8mm, anchor=center] at (smon) {};

\end{tikzpicture}
\end{center}
\caption{Relations between delayable learning restrictions in explanatory finitely bounded memory states learning of languages from informants. The arrows represent implications independent of the model. The outlined areas stand for equivalence classes with respect to learning power, when the underlying model is $\Txt\BMS_\ast\Ex$.}
\label{fig:TxtBMSEx}
\end{figure}
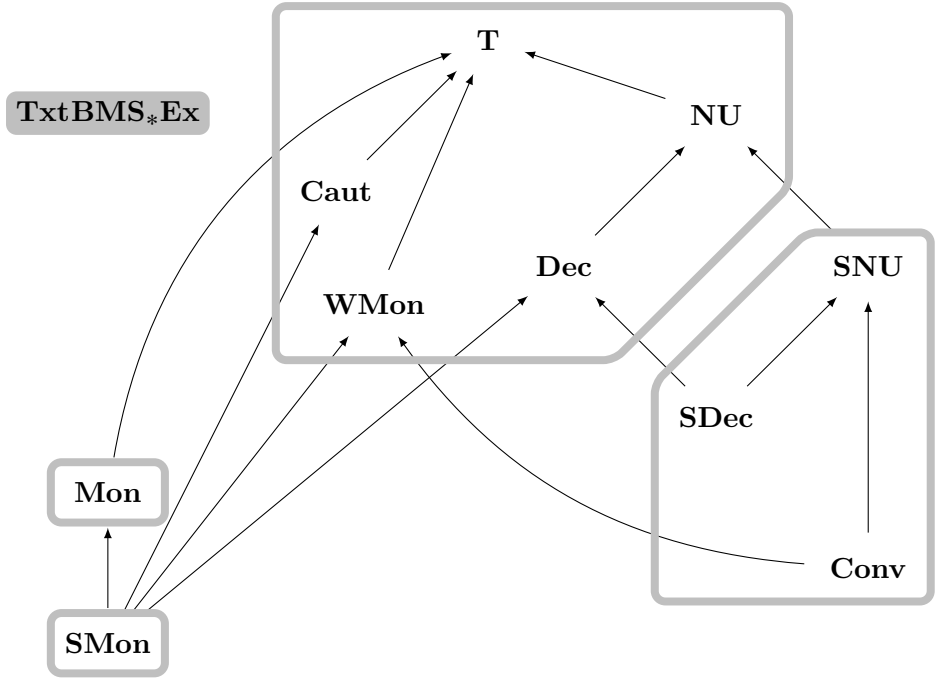

\bigskip
As this map equals the one for $\It$-learning, naturally the question arises, whether a result similar to Lemma~\ref{BMS-It_sem-afsoet} can be observed for the syntactic learning criteria. In the following we show that this is not the case.

\begin{thm}\label{thm:itBMSDifferSNU}
$[\It\Txt\SNU\Ex] \subsetneq [\Txt\BMS_\ast\SNU\Ex]$
\end{thm}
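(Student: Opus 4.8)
The inclusion $[\It\Txt\SNU\Ex]\subseteq[\Txt\BMS_\ast\SNU\Ex]$ is immediate: since $\SNU$ allows for simulation on equivalent text, Lemma~\ref{BMS-It_sem-afsoet}(\ref{BMSIt}) applied with $\delta=\SNU$ gives the containment (one reads the hypotheses of the iterative learner as states). So the entire content of the theorem is the strictness, for which the plan is to exhibit a class $\CalL\in[\Txt\BMS_\ast\SNU\Ex]\setminus[\It\Txt\SNU\Ex]$.

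For the positive membership I would build a self-learning $\BMS$-learner $M$ of the same flavour as in the proof of Theorem~\ref{BMS-SMonSNU}: the incoming datum $x$ is read as a program, the next hypothesis is $\varphi_x$ applied to the current one, and the state records the current hypothesis together with exactly those data that have already triggered a mind change. I would take $\CalL=\Txt\BMS_\ast\Ex(M)$, but---unlike in Theorem~\ref{BMS-SMonSNU}---I would arrange the programs so that $M$ acts \emph{strongly locking} on every $L\in\CalL$: once the correct hypothesis for $L$ has been reached, every further $x\in L$ acts as the identity on that hypothesis and leaves the state unchanged. By Theorem~\ref{StronglyLockingImpliesWb} such a class lies in $[\Txt\BMS_\ast\Wb\Ex]$, and witness-basedness implies conservativeness, so by Corollary~\ref{BMS-ConvSDecSNU} we obtain $\CalL\in[\Txt\BMS_\ast\SNU\Ex]$.

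For the negative part I would assume an iterative learner $N$ strongly non-U-shapedly $\Ex$-learns $\CalL$ and derive a contradiction by an ORT-argument patterned on the one in Theorem~\ref{BMS-SMonSNU}. Because $N$ is iterative, its whole memory is its current conjecture, so I only need to monitor $N$'s hypothesis sequence; there is no separate state to track, which is the structural simplification over Theorem~\ref{BMS-SMonSNU}. Using 1-1 ORT I would produce $a,b,e_1,e_2\in\totalCp$ and $e_0\in\N$ together with the c.e.\ predicates $\MC$ and $\NoMC$ recording whether $N$ changes its conjecture after a padding segment $a[k]\concat b(k)\concat\#^{\ell_k}$; the self-reference makes $W_{e_0}$ encode $N$'s own point of convergence, pinning a time $t_0$, and then $W_{e_1(t_0)}$ and $W_{e_2(t_0)}$ are arranged to both lie in $\CalL$ and share a common value $L$. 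The construction forces $N$ to revise its (already correct) conjecture for $L$, hence to perform a syntactic U-shape on a text for $L$, contradicting $\SNU$.

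The main obstacle is making one and the same $\CalL$ do both jobs at once: be strongly-lockingly $\BMS_\ast$-learnable (so the positive machinery of Theorem~\ref{StronglyLockingImpliesWb} and Corollary~\ref{BMS-ConvSDecSNU} applies) and simultaneously defeat every iterative $\SNU$-learner. This rests entirely on the memory asymmetry between the two models: the $\BMS_\ast$-learner stores the distinguishing mind-change data in its state and can therefore keep a single syntactic hypothesis forever, avoiding any U-shape, whereas an iterative learner would have to encode the same data in the hypothesis itself, and the ORT adversary drives it into exactly the configuration where recording the distinguishing datum demands a syntactic change on an already-correct conjecture. Closing the ORT self-reference while certifying that every language used in the diagonalization genuinely lies in $\CalL$---so that $N$ is in fact obliged to learn it---is the delicate, calculation-heavy step.
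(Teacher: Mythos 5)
Your inclusion argument and your overall architecture (self-learning $\BMS$-class plus an ORT diagonalization against an assumed iterative $\SNU$-learner $N$) match the paper, and your positive-side route is a legitimate variant: the paper does \emph{not} go through Theorem~\ref{StronglyLockingImpliesWb}, but instead takes $\CalL=\Txt\BMS_\ast\SNU\Ex(M)$ and verifies strong non-U-shapedness of $M$ directly. Note, however, a slip in your version: for a self-learning learner, $\CalL=\Txt\BMS_\ast\Ex(M)$ will in general contain languages on which $M$ is \emph{not} strongly locking (the data-programs are supplied adversarially, and $\Ex$-convergence on every text does not imply that some initial segment of every text is locking), so you cannot simply ``arrange'' strong locking on every $L\in\CalL$; you would have to cut $\CalL$ down to the languages learned with the desired property, exactly as the paper does with $\SNU$.

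The genuine gap is that the diagonalization itself --- which you defer as the ``delicate, calculation-heavy step'' --- is where the proof lives, and your sketch of it follows the template of Theorem~\ref{BMS-SMonSNU} at precisely the points where this proof must differ. In the paper there is no ``common value $L$'' shared by $W_{e_1(t_0)}$ and $W_{e_2(t_0)}$: they are distinct ($W_{e_2(t_0)}=\ran(a)\cup\{b(t_0)\}$ is infinite), the predicates $\MC$/$\NoMC$ are applied to single data $a(j)$ rather than to sequences $\sigma$ with the $E_k$/well-order machinery, and the U-shape is obtained on the text $a[t_0]\concat a(j)\concat b(t_0)\concat\#^{\ell_{t_0}}\concat a(j)\concat\#^\infty$ by exploiting that iterativeness forces $N(a[t_0]\concat a(j))=N(a[t_0])$ --- $N$ provably \emph{absorbs} $a(j)$ into its converged conjecture, so it is locked on a correct hypothesis for $W_{e_1(t_0)}$ after $\#^{\ell_{t_0}}$, and re-feeding the already-seen datum $a(j)$ then forces a syntactic mind change. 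Symmetrically, the reason $M$ itself avoids any U-shape is a concrete device your sketch lacks: $M$ stores a flag $\xi$ in its state and switches from $e_1(k)$ to $e_2(k)$ only after seeing two \emph{new} $a$-data of opposite parities, which can never happen on a text for $W_{e_1(k)}$ since $|W_{e_1(k)}\setminus(\ran(a[k])\cup\{b(k)\})|\leq 1$. Without some such mechanism, your central claim --- that the construction ``forces $N$ to revise its already correct conjecture'' while $M$ (or a learner derived from it) never does --- is asserted rather than proved, and it is exactly the statement the ORT construction must be engineered to deliver.
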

\begin{proof}
By Lemma~\ref{BMS-It_sem-afsoet} we have $[\It\Txt\SNU\Ex] \subseteq [\Txt\BMS_\ast\SNU\Ex]$.

We consider the $\BMS$-learner $M$ initialized with state $\langle \langle \,?,0 \rangle,\langle\varnothing\rangle \rangle$ and $h_M$ and $s_M$ for every $\langle e,\xi \rangle \in\Omega$, $D\subseteq\N$ finite and $x\in\Sigma$ defined by:
\begin{align*}
s_M(\langle \langle e,\xi\rangle,\langle D \rangle \rangle,x)&=
\begin{cases}
\langle \langle e,\xi\rangle,\langle D\rangle \rangle, 												&\text{if } x\in D\cup\{\#\} \;\vee\; \\
& \hspace{2ex} \pr_1(\,\varphi_x(\langle e,\xi\rangle)\!\downarrow\,\,)= e;\\
\langle \varphi_x(\langle e,\xi \rangle), \langle D\cup\{x\}\rangle\rangle, 	&\text{else if } \pr_1(\,\varphi_x(\langle e,\xi \rangle)\!\downarrow\,\,)\neq e;\\
\uparrow, 																																		&\text{otherwise.}
\end{cases} \\
h_M(\langle \langle e,\xi\rangle,\langle D \rangle\rangle,x)&= \begin{cases}
e, &\text{if } x\in D\cup\{\#\} \;\vee\; \\
& \hspace{2ex} \pr_1(\,\varphi_x(\langle e,\xi\rangle)\!\downarrow\,\,)= e;\\
\pr_1(\,\varphi_x(\langle e,\xi\rangle)\,), &\text{else if } \pr_1(\,\varphi_x(\langle e,\xi\rangle)\!\downarrow\,\,)\neq e;\\
\uparrow, &\text{otherwise.}
\end{cases}
\end{align*}
Additionally to the last hypothesis as well as exactly the data that already lead to a mind-change of $M$, some parameter $\xi$ is stored, indicating whether a further mind-change may cause a syntactic $U$-shape.

Let $\CalL=\Txt\BMS_\ast\SNU\Ex(M)$. 
We will show that there is no iterative learner $\It\Txt\SNU\Ex$-learning $\CalL$.
Assume $N$ is an iterative learner with hypothesis generating function $h_N$ and $\CalL\subseteq\It\Txt\Ex(N)$.

We obtain $L\in \CalL\setminus\It\Txt\SNU\Ex(N)$ by applying 1-1 ORT \cite{case1974periodicity} referring to the $\Sigma_1$-predicates $\MC$ and $\NoMC$, 
expressing that $N$ does (not) perform a mind-change on a text built from parameters $a,b\in\totalCp$. More specifically, the predicates state that $N$ does converge and (not) make a mind-change when observing $\sigma \in \Seq{\Sigma}$ after having observed $a[i]\concat b(i)\concat \#^{\ell_i}$, with $i\in\N$.
\begin{align*}
\psi_{i}(\ell) &\Leftrightarrow N(a[i]\concat b(i)\concat \#^{\ell})= N(a[i]\concat b(i)\concat \#^{\ell+1});\\
\NoMC(i,\sigma) &\Leftrightarrow \exists {\ell_i}\in\N\:(\,\psi_{i}(\ell_i) \:\wedge\: \forall \ell<\ell_i \:\neg \psi_i(\ell) \:\wedge\: \\
& \hspace{12.5ex} N(a[i]\concat b(i)\concat \#^{{\ell_i}}\concat \sigma)\!\downarrow \;= N(a[i]\concat b(i)\concat \#^{{\ell_i}})\,); \\
\MC(i,\sigma) &\Leftrightarrow \exists {\ell_i}\in\N\:(\,\psi_{i}(\ell_i) \:\wedge\: \forall \ell<\ell_i \:\neg \psi_i(\ell) \:\wedge\: \\
& \hspace{12.5ex} N(a[i]\concat b(i)\concat \#^{{\ell_i}}\concat \sigma)\!\downarrow \;\neq N(a[i]\concat b(i)\concat \#^{{\ell_i}}) \,).
\end{align*}

By 1-1 ORT, applied to the recursive operator implicit in the following case distinction, there are recursive total functions $a,b,e_1,e_2$ with pairwise disjoint ranges and $e_0\in\N$, such that for all $i,\xi\in\N$, $e\in\Omega$
\begin{align*}
\varphi_{a(i)}(\langle e,\xi \rangle) & =
\begin{cases}
	\langle e_0,\xi \rangle, 				&\text{if } e\in\{?,e_0\};\\
	\langle e_1(k),1 \rangle,				&\text{else if } \xi = 0, i\text{ even and }\exists k\leq i \,(\,e=e_1(k)\,);\\
	\langle e_1(k),2 \rangle,				&\text{else if } \xi = 0, i\text{ odd and }\exists k\leq i \,(\,e=e_1(k)\,);\\
	\langle e_2(k),0 \rangle,				&\text{else if } \xi = 1, i\text{ odd and }\exists k\leq i \,(\,e=e_1(k)\,);\\
	\langle e_2(k),0 \rangle,				&\text{else if } \xi = 2, i\text{ even and }\exists k\leq i \,(\,e=e_1(k)\,);\\
	\langle e,\xi \rangle, 					&\text{otherwise;}
\end{cases} \\
\varphi_{b(i)}(\langle e, \xi\rangle)&=\begin{cases}
\langle e_1(i),\xi\rangle, &\text{if } e\in\{?,e_0\};\\
\langle e,\xi\rangle, &\text{otherwise;}\\
\end{cases}\\
W_{e_0}&=\begin{cases}
\ran(a[t_0]), &\text{if $t_0$ is minimal with } \forall t\geq t_0\, N(a[t])=N(a[t_0]); \\
\ran(a), &\text{no such $t_0$ exists};
\end{cases}\\
W_{e_1(i)}&=
\ran(a[i])\cup\{b(i)\}\cup\begin{cases}
\{a(j)\} &\text{for first } j \geq i \text{ found} \\
&\text{with }\MC(i,a(j));\\
\varnothing, &\text{no such $j$ exists;}
\end{cases} \\
W_{e_2(i)}&=
\ran(a)\cup\{b(i)\}.
\end{align*}

As the learner constantly puts out $e_0$ on every text for $W_{e_0}$, we have $W_{e_0}\in\CalL$.
Thus, also $N$ learns the finite language $W_{e_0}$ and $t_0$ exists. Note that by the iterativeness of $N$ we obtain $N(a[t_0])=N(a[t_0]\concat a(i)) \text{ for all } i \geq t_0$ and with this
$N(a[t_0]\concat b(t_0)\concat \#^{\ell_{t_0}})=N(a[t_0]\concat a(i)\concat b(t_0)\concat \#^{\ell_{t_0}}) \text{ for all } i \geq t_0.$

$W_{e_1(t_0)}$ and $W_{e_2(t_0)}$ also lie in $\CalL$. 
To see that $M$ explanatory learns both of them, note that, after having observed $b(t_0)$, $M$ only changes its mind from $e_1(t_0)$ to $e_2(t_0)$ after having seen $a(i)$ and $a(j)$ with $i,j\geq t_0$ and $i \in 2\N$ as well as $j \in 2\N+1$. This clearly happens for every text for the infinite language $W_{e_2(t_0)}$. As $|W_{e_1(t_0)}\setminus \left( \cnt(a[t_0])\cup\{b(t_0)\} \right)| \leq 1$, this mind change never occurs for any text for $W_{e_1(t_0)}$.

The syntactic non-U-shapedness of $M$'s learning processes can be easily seen as for all $k, l \in \N$ the languages $W_{e_0}$, $W_{e_1(k)}$ and $W_{e_2(l)}$ are pairwise distinct, the learner never returns to an abandoned hypothesis and $M$ only leaves hypothesis $\langle e_1(k),0\rangle$ for $\langle e_1(k),\xi\rangle$, $\xi\neq 0$, if $W_{e_1(k)}$ is not correct. 

Next, we show the existence of $j \geq t_0$ with $\MC(t_0,a(j))$.
Assume towards a contradiction that $j$ does not exist.
Then $W_{e_1(t_0)}=\cnt(a[t_0])\cup\{b(t_0)\}$.
As $M$ learns this language from the text $a[t_0]\concat b(t_0)\concat\#^\infty$, so does $N$.
The convergence of $N$ implies the existence of $\ell_{t_0}$.
Thus, for every $j\in\N$ we either have $N(a[t_0]\concat b(t_0)\concat\#^{\ell_{t_0}}\concat a(j))=N(a[t_0]\concat b(t_0)\concat\#^{\ell_{t_0}})$ or the computation of $N(a[t_0]\concat b(t_0)\concat\#^{\ell_{t_0}}\concat a(j))$ does not terminate.
Because $N$ is iterative and learns $W_{e_2(t_0)}$, it may not be undefined and therefore always the latter is the case.
But then $N$ will not learn $W_{e_1(t_0)}$ and $W_{e_2(t_0)}$ as they are different but $N$ does not make a mind-change on the text $a[t_0]\concat b(t_0)\concat \#^{\ell_{t_0}} \concat a$ after having observed the initial segment $a[t_0]\concat b(t_0)\concat \#^{\ell_{t_0}}$, due to its iterativeness.
Hence, $j$ exists and $W_{e_1(t_0)}=\ran(a[t_0])\cup\{b(t_0),a(j)\}$.

Finally, by the choice of $j$, the learner $N$ does perform a syntactic U-shape on the text $a[t_0]\concat a(j)\concat b(t_0)\concat \#^{\ell_{t_0}}\concat a(j)\concat \#^\infty$ for $W_{e_1(t_0)}$.
More precisely, $t_0$ and $\ell_{t_0}$ were chosen such that $N(a[t_0]\concat a(j)\concat b(t_0)\concat \#^{\ell_{t_0}})$ has to be correct and the characterizing property of $j$ assures $$N(a[t_0]\concat a(j)\concat b(t_0)\concat \#^{\ell_{t_0}})\neq N(a[t_0]\concat a(j)\concat b(t_0)\concat \#^{\ell_{t_0}}\concat a(j)).$$
Thus, no iterative learner can explanatory syntactically non-U-shapedly learn the language $\CalL$.
\end{proof}

\smallskip
By Corollary~\ref{BMS-ConvSDecSNU} we also obtain $[\It\Txt\SDec\Ex] \subsetneq [\Txt\BMS_\ast\SDec\Ex]$ and $[\It\Txt\Conv\Ex] \subsetneq [\Txt\BMS_\ast\Conv\Ex]$.

\section{Related Open Problems}

We have given a complete map for learning with bounded memory states, where, on the way to success, the learner must use only finitely many states.
Future work can address the complete maps for learning with an a priori bounded number of memory states, which needs very different combinatorial arguments.
Results in this regard can be found in \cite{Car-Cas-Jai-Ste:j:07} and \cite{Cas-Koe:j:11:memLess}.
We expect to see trade-offs, for example allowing for more states may make it possible to add various learning restrictions (just as non-deterministic finite automata can be made deterministic at the cost of an exponential state explosion).

\medskip
Also memory-restricted learning from positive and negative data (so-called informant) has only partially been investigated for iterative learners and to our knowledge not at all for other models of memory-restricted learning.
Very interesting also in regard of 1-1 hypothesis spaces that prevent coding tricks is the $\mathbf{Bem}$-hierarchy, see \cite{Ful-Jai-Osh:j:94:osw}, \cite{Lan-Zeu:j:96} and \cite{Cas-Jai-Lan-Zeu:j:99:feedback}.

\medskip
In the spirit of grammatical inference, we encourage to investigate the learnability of carefully chosen indexable families arising from applied machine learning or cognitive science research.

\subsection*{Acknowledgements}
This work was supported by DFG Grant Number KO 4635/1-1. We are grateful to the people supporting us.

\bibliographystyle{alpha}

\bibliography{CLTBib}

\newcommand{\etalchar}[1]{$^{#1}$}
\begin{thebibliography}{MPU{\etalchar{+}}92}

\bibitem[AKS18]{As-Koe-Sei2018_informants}
M.~Aschenbach, T.~K{\"o}tzing, and K.~Seidel.
\newblock Learning from informants: Relations between learning success
  criteria.
\newblock {\em arXiv preprint arXiv:1801.10502}, 2018.

\bibitem[Ang80]{angluin1980inductive}
D.~Angluin.
\newblock Inductive inference of formal languages from positive data.
\newblock {\em Information and control}, 45(2):117--135, 1980.

\bibitem[BB75]{Blu-Blu:j:75}
L.~Blum and M.~Blum.
\newblock Toward a mathematical theory of inductive inference.
\newblock {\em Information and Control}, 28:125--155, 1975.

\bibitem[BCM{\etalchar{+}}08]{Bal-Cas-Mer-Ste-Wie:j:08}
G.~Baliga, J.~Case, W.~Merkle, F.~Stephan, and R.~Wiehagen.
\newblock When unlearning helps.
\newblock {\em Information and Computation}, 206:694--709, 2008.

\bibitem[Cas74]{case1974periodicity}
J.~Case.
\newblock Periodicity in generations of automata.
\newblock {\em Mathematical Systems Theory}, 8(1):15--32, 1974.

\bibitem[Cas94]{Cas:j:94:self}
J.~Case.
\newblock Infinitary self-reference in learning theory.
\newblock {\em Journal of Experimental and Theoretical Artificial
  Intelligence}, 6:3--16, 1994.

\bibitem[CC13]{Car-Cas:j:13:topics}
L.~Carlucci and J.~Case.
\newblock On the necessity of {U}-shaped learning.
\newblock {\em Topics in Cognitive Science}, 5:56--88, 2013.
\newblock Invited for Special Issue on Formal Learning Theory; see {\tt
  dx.doi.org/10.1111/tops.12002} for html form.

\bibitem[CCJS07]{Car-Cas-Jai-Ste:j:07}
L.~Carlucci, J~Case, S.~Jain, and F.~Stephan.
\newblock Results on memory-limited {U}-shaped learning.
\newblock {\em Information and Computation}, 205:1551--1573, 2007.

\bibitem[CJLZ99]{Cas-Jai-Lan-Zeu:j:99:feedback}
J.~Case, S.~Jain, S.~Lange, and T.~Zeugmann.
\newblock Incremental concept learning for bounded data mining.
\newblock {\em Information and Computation}, 152:74--110, 1999.

\bibitem[CK10]{Cas-Koe:c:10:colt}
J.~Case and T.~K{\"o}tzing.
\newblock Strongly non-{U}-shaped learning results by general techniques.
\newblock In Adam~Tauman Kalai and Mehryar Mohri, editors, {\em {COLT} 2010},
  pages 181--193, 2010.

\bibitem[CK13]{Cas-Koe:j:11:memLess}
J.~Case and T.~K{\"o}tzing.
\newblock Memory-limited non-u-shaped learning with solved open problems.
\newblock {\em Theoretical Computer Science}, 473:100--123, 2013.

\bibitem[CK16]{case2016strongly}
J.~Case and T.~K{\"o}tzing.
\newblock Strongly non-u-shaped language learning results by general
  techniques.
\newblock {\em Information and Computation}, 251:1--15, 2016.

\bibitem[CM08]{Cas-Moe:j:08:NUIt}
J.~Case and S.~Moelius.
\newblock U-shaped, iterative, and iterative-with-counter learning.
\newblock {\em Machine Learning}, 72:63--88, 2008.

\bibitem[CM11]{Cas-Moe:j:11:optLan}
J.~Case and S.~Moelius.
\newblock Optimal language learning from positive data.
\newblock {\em Information and Computation}, 209:1293--1311, 2011.

\bibitem[FJO94]{Ful-Jai-Osh:j:94:osw}
M.~Fulk, S.~Jain, and D.~Osherson.
\newblock Open problems in {S}ystems {T}hat {L}earn.
\newblock {\em Journal of Computer and System Sciences}, 49(3):589--604,
  December 1994.

\bibitem[Gol67]{Gol:j:67}
E.~Gold.
\newblock Language identification in the limit.
\newblock {\em Information and Control}, 10:447--474, 1967.

\bibitem[Jan91]{j-mniifp-91}
K.~P. Jantke.
\newblock Monotonic and nonmonotonic inductive inference of functions and
  patterns.
\newblock In {\em Nonmonotonic and Inductive Logic, 1st International Workshop,
  Proc.}, pages 161--177, 1991.

\bibitem[JKMS16]{jain2016role}
S.~Jain, T.~K{\"o}tzing, J.~Ma, and F.~Stephan.
\newblock On the role of update constraints and text-types in iterative
  learning.
\newblock {\em Information and Computation}, 247:152--168, 2016.

\bibitem[JLZ06]{Jai-Lan-Zil:c:06}
S.~Jain, S.~Lange, and S.~Zilles.
\newblock Towards a better understanding of incremental learning.
\newblock In {\em ALT}, volume 4264 of {\em Lecture Notes in Computer Science},
  pages 169--183, 2006.

\bibitem[JMZ13]{Jai-Moe-Zil:j:13}
S.~Jain, S.~Moelius, and S.~Zilles.
\newblock Learning without coding.
\newblock {\em Theoretical Computer Science}, 473:124--148, 2013.

\bibitem[JORS99]{Jai-Osh-Roy-Sha:b:99:stl2}
S.~Jain, D.~Osherson, J.~Royer, and A.~Sharma.
\newblock {\em Systems that Learn: {A}n Introduction to Learning Theory}.
\newblock MIT Press, Cambridge, Massachusetts, second edition, 1999.

\bibitem[K{\"o}t09]{Koe:th:09}
T.~K{\"o}tzing.
\newblock {\em Abstraction and Complexity in Computational Learning in the
  Limit}.
\newblock PhD thesis, University of Delaware, 2009.

\bibitem[KP16]{kotzing2016map}
T.~K{\"o}tzing and R.~Palenta.
\newblock A map of update constraints in inductive inference.
\newblock {\em Theoretical Computer Science}, 650:4--24, 2016.

\bibitem[KS16]{kotzing2016towards}
T.~K{\"o}tzing and M.~Schirneck.
\newblock Towards an atlas of computational learning theory.
\newblock In {\em 33rd Symposium on Theoretical Aspects of Computer Science},
  2016.

\bibitem[KSS17]{KSS17}
T.~Kötzing, M.~Schirneck, and K.~Seidel.
\newblock Normal forms in semantic language identification.
\newblock In {\em Proc.~of Algorithmic Learning Theory}, pages 493--516. PMLR,
  2017.

\bibitem[LZ96]{Lan-Zeu:j:96}
S.~Lange and T.~Zeugmann.
\newblock Incremental learning from positive data.
\newblock {\em Journal of Computer and System Sciences}, 53:88--103, 1996.

\bibitem[MPU{\etalchar{+}}92]{Mar:b:92}
G.~Marcus, S.~Pinker, M.~Ullman, M.~Hollander, T.J. Rosen, and F.~Xu.
\newblock {\em Overregularization in Language Acquisition}.
\newblock Monographs of the Society for Research in Child Development, vol.~57,
  no.~4. University of Chicago Press, 1992.
\newblock Includes commentary by H. Clahsen.

\bibitem[Odi99]{Odi:b:99}
P.~Odifreddi.
\newblock {\em Classical Recursion Theory}, volume~II.
\newblock Elsivier, Amsterdam, 1999.

\bibitem[OSW82]{Osh-Sto-Wei:j:82:strategies}
D.~Osherson, M.~Stob, and S.~Weinstein.
\newblock Learning strategies.
\newblock {\em Information and Control}, 53:32--51, 1982.

\bibitem[OSW86]{STL1}
D.~Osherson, M.~Stob, and S.~Weinstein.
\newblock {\em Systems that Learn: {A}n Introduction to Learning Theory for
  Cognitive and Computer Scientists}.
\newblock MIT Press, Cambridge, Mass., 1986.

\bibitem[RC94]{Roy-Cas:b:94}
J.~Royer and J.~Case.
\newblock {\em Subrecursive Programming Systems: Complexity and Succinctness}.
\newblock Research monograph in {\em Progress in Theoretical Computer Science}.
  {Birkh\"auser}~Boston, 1994.

\bibitem[SS82]{Str-Sta:b:82}
S.~Strauss and R.~Stavy, editors.
\newblock {\em U-Shaped Behavioral Growth}.
\newblock Developmental Psychology Series. Academic Press, NY, 1982.

\bibitem[Wie91]{Wie:c:91}
R.~Wiehagen.
\newblock A thesis in inductive inference.
\newblock In {\em Nonmonotonic and Inductive Logic, 1st International Workshop,
  Proc.}, pages 184--207, 1991.

\end{thebibliography}

\end{document}